\documentclass[12pt]{article}
\usepackage{hyperref}
\usepackage{float}
\usepackage{amsmath}
\usepackage{graphicx,psfrag,epsf}
\usepackage{enumerate}
\usepackage{natbib}
\usepackage{multirow}
\usepackage[ruled,vlined]{algorithm2e}
\usepackage{array}

\usepackage{xcolor}

\usepackage{graphicx}
\usepackage[utf8 ]{inputenc}
\usepackage[english]{babel}
\usepackage{amsthm}
\usepackage[bottom]{footmisc}
\usepackage{longtable}
\usepackage{mathrsfs}
\usepackage{extarrows}
\usepackage{bigints}
\usepackage{subcaption}

\usepackage{setspace}

\usepackage[top=0.58in, bottom=0.68in, left=0.50in, right=0.50in]{geometry}
\usepackage{amsmath}
\usepackage{amsfonts}


\newtheorem{theorem}{Theorem}[section]
\newtheorem{lemma}[theorem]{Lemma}
\newtheorem{proposition}[theorem]{Proposition}

\theoremstyle{definition}
\newtheorem{remark}{Remark}[section]

\newtheorem{definition}{Definition}[section]

\DeclareTextFontCommand{\textmyfont}{\footnotesize}

\setlength{\LTcapwidth}{7in}

\begin{document}
\def\spacingset#1{\renewcommand{\baselinestretch}%
{#1}\small\normalsize}

\spacingset{1.4}

\title{\bf Student-t Stochastic Volatility Model With Composite Likelihood EM-Algorithm}

\author{Raanju R. Sundararajan$^\star$\footnote{Email: \href{mailto:rsundararajan@smu.edu}{rsundararajan@smu.edu}}\,\,\, and \, Wagner Barreto-Souza$^\ddag$\footnote{Email: \href{mailto:wagner.barretosouza@kaust.edu.sa}{wagner.barretosouza@kaust.edu.sa} (Corresponding Author)} \\
	$^\star${\it \small Department of Statistical Science, Southern Methodist University,  Dallas, Texas, USA} \\
	$^\ddag${\it \small Statistics Program, King Abdullah University of Science and Technology, Thuwal, Saudi Arabia}}

\vspace{0.5cm}

\date{}

\maketitle

\begin{abstract}
A new robust stochastic volatility (SV) model having Student-$t$ marginals is proposed. Our process is defined through a linear normal regression model driven by a latent gamma process that controls temporal dependence. This gamma process is strategically chosen to enable us to find an explicit expression for the pairwise joint density function of the Student-$t$ response process. With this at hand, we propose a composite likelihood (CL) based inference for our model, which can be straightforwardly implemented with a low computational cost. This is a remarkable feature of our Student-$t$ SV process over existing SV models in the literature that involve computationally heavy algorithms for estimating parameters. Aiming at a precise estimation of the parameters related to the latent process, we propose a CL Expectation-Maximization algorithm and discuss a bootstrap approach to obtain standard errors. The finite-sample performance of our composite likelihood methods is assessed through Monte Carlo simulations. The methodology is motivated by an empirical application in the financial market. We analyze the relationship, across multiple time periods, between various US sector Exchange-Traded Funds returns and individual companies' stock price returns based on our novel Student-$t$ model. This relationship is further utilized in selecting optimal financial portfolios. \\

\noindent {\bf 2020 AMS Subject Classification}. Primary: 62M10.\\

\noindent {\bf Keywords and phrases}: Composite likelihood estimation; EM-algorithm; Exchange-Traded Funds; Quantitative trading strategies; Stochastic volatility models; Robustness; Portfolio optimization.\\

\end{abstract}

\section{Introduction} \label{s:intro}

In a pioneering paper, \cite{eng1982} introduced the class of Autoregressive Conditional Heteroscedastic (ARCH) models, which was extended by the GARCH models proposed by \cite{bol1986}. The literature on GARCH models is very rich and quite extensive. We recommend the book by \cite{frazak2019} for an updated account on this topic.

The stochastic volatility (SV) models are an important alternative to the GARCH approach \citep{brorui2004}. A standard SV model is defined by
\begin{eqnarray}\label{standSV1}
Y_t|h_t\sim N(0,\sigma^2 e^{h_t}), \quad h_t=\rho h_{t-1}+\eta_t, \quad t\in\mathbb N,
\end{eqnarray}
where $\{\eta_t\}_{t\in\mathbb N}\stackrel{iid}{\sim}N(0,\sigma^2_\eta)$, $|\rho|<1$, and $\sigma^2,\sigma_\eta^2>0$; for instance, see \cite{tay1986}, \cite{tay1994}, and \cite{kimetal1998}.

There are many variants of the SV model in (\ref{standSV1}). Aiming at robustness against the normal assumption in (\ref{standSV1}), \cite{jacetal1994}, \cite{haretal1994}, \cite{sankoo1998}, and \cite{chietal2002} considered a Student-$t$ distribution as a marginal or conditional distribution in (\ref{standSV1}). Other relevant variants are due to \cite{bn1997}, \cite{mahsch1998}, \cite{jacetal2004}, \cite{abretal2006}, \cite{nakomo2009}, \cite{jenmah2010}, and \cite{delgri2011}, just to name a few.

Markov Chain Monte Carlo strategies for estimating SV models have been extensively considered in the literature by \cite{jacetal1994}, \cite{shepit1997}, \cite{kimetal1998}, and \cite{chietal2002}. Other Bayesian inference approaches for SV models are addressed by \cite{rueetal2009}, \cite{gircal2011}, \cite{kassch2014}, \cite{tranetal2017}, \cite{viretal2019}, and \cite{gonsto2021}, among others.  
SV models under a state-space perspective were proposed by \cite{koosch2013}, \cite{gouyan2016}, \cite{cre2017}, and \cite{boretal2020}.
Quasi-likelihood and adaptive sequential Monte Carlo approaches for estimating parameters of L\'evy-driven stochastic volatility models were proposed respectively by \cite{ver2011} and \cite{jasetal2011}.

In this paper, we propose a new robust stochastic volatility model having Student-$t$ marginals. Our response process is defined by a linear normal regression model driven by a latent gamma process responsible for controlling the temporal dependence. We consider the first-order gamma autoregressive model by \cite{sim1990} as the latent process, and this is strategically chosen because it enables us to find an explicit expression for the pairwise joint density function of our Student-$t$ response process. With this setup, we propose a composite likelihood (CL) based inference for our model, which can be straightforwardly implemented with a low computational cost. This is an advantageous feature of our Student-$t$ SV model over the existing SV models in the literature that often require very time-consuming algorithms for estimating parameters. With the aim of precise estimation of the parameters related to the latent process, we also propose a CL Expectation-Maximization (CLEM) algorithm. It is worth mentioning that the optimization required in the proposed CLEM algorithm is much simpler than the direct maximization of the log-composite likelihood function for our model since it depends on the hypergeometric function. This will be discussed in detail in Section \ref{sec:pairwise_likelihood}.

Only a few papers have addressed estimation for stochastic volatility models based on a composite likelihood approach. For instance, the methods in \cite{barcho2014} and \cite{chanetal2018} use a Bayesian approach, and \cite{gouyan2016} use a frequentist perspective, with the latter being related to our work. There, the authors propose an SV model with Student-$t$ marginals with an estimation based on a composite likelihood approach. We call the attention that the latent gamma process we are using here is different from that in \cite{gouyan2016} and this choice leads us to obtain an explicit and simple form for the pairwise likelihood function of the response process. Further, as a consequence, a composite likelihood EM-algorithm with a closed-form for the E-step and simple maximization for the M-step is proposed here; this is not considered in \cite{gouyan2016}. Another critical difference is that our model allows for covariates, which is crucial for our empirical application to financial portfolio management.

We now summarize the main features and contributions of our paper as follows. 
\begin{itemize}

	\item As argued by \cite{sankoo1998} (see Subsection 5.2 of their work), larger kurtosis is often observed in the marginal density of financial data and a Student-$t$ distribution is recommended for this task. By construction, our model includes this feature by resulting in a Student-$t$ distribution as the marginal distribution of the response process.
	
	\item  Our inferential procedure is robust to misspecification since it does not require a full-likelihood specification, but just the pairwise density functions. Furthermore, our model is robust as it is equipped to deal with outliers due to the Student-$t$ marginals.

	\item We propose a composite likelihood EM-algorithm method with a simple optimization procedure that efficiently estimates all the model parameters, including those related to the latent process, which is a well-known challenging problem in practice. The computer code for implementing our method in \texttt{R} is made available. 
	
	\item The composite likelihood-based inferential procedures discussed in this paper demand low to moderate computational cost relative to the estimation techniques for SV models discussed in existing literature, especially those involving Bayesian and/or Monte Carlo strategies, where complex and time-consuming algorithms are required.
	
	\item In our financial data application, we apply our method to find relationships, across multiple time periods, between US sector ETFs and individual companies' stock price returns and use them for optimal portfolio selection.  Existing methods in the financial literature use penalized regressions, like LASSO, and do not provide many details on statistical inference and are also less suitable in a factor regression setup with dependent factors. Our formulation, resulting in Student-$t$ marginals of the response, is more appropriate for financial data and provides means for carrying out inference which is critical in understanding which sector ETFs are important in explaining movements in stock price returns of individual companies.    
\end{itemize}

The paper is organized in the following manner. In Section \ref{sec:model}, we define our new Student-$t$ stochastic volatility model and obtain some of its properties. In particular, we provide an explicit expression for the joint density function of any pair of observations from the process, which plays a central role in this paper. Statistical inference based on the CL approach is addressed in Section \ref{sec:pairwise_likelihood}. We also develop a CLEM algorithm and discuss how to obtain standard errors using a bootstrap approach. Monte Carlo simulations to study and compare the CL and CLEM approaches are presented in Section \ref{sec:simulation}. In Section \ref{sec:application}, we apply the proposed Student-$t$ SV model to analyze the relationship, across multiple time periods, between various US sector Exchange-Traded Funds returns and individual companies' stock price returns. This relationship is further utilized in selecting optimal financial portfolios. Concluding remarks and topics for future research are discussed in Section \ref{sec:conclusion}. This paper contains Supplementary Material, which can be obtained from the authors upon request.

\section{A novel Student-$t$ stochastic volatility model}\label{sec:model}

In this section, we introduce our proposed Student-$t$ stochastic volatility model and provide some of its properties. We begin by introducing some needed notation.

Throughout this paper, we denote a random variable $Z$ following a gamma distribution with moment generating function $\Psi_Z(s)=E(\exp\{sZ\})=(1-s/\gamma)^{-\nu}$, for $s<\gamma$, by $Z\sim\mbox{G}(\gamma,\nu)$, where $\gamma>0$ and $\nu>0$ denote respectively the rate and shape parameters; the exponential case (with $\nu$=1) is denoted by $Z\sim\mbox{Exp}(\theta)$. Further, we denote a Poisson random variable $N$ with mean $\lambda>0$ by $N\sim\mbox{Pois}(\lambda)$. 

We now present the gamma autoregressive (GAR) process by \cite{sim1990}. We fix the mean of the process equal to 1 to avoid non-identifiability problems in what follows. 

\begin{definition} \label{def:gar}
	The first-order $\mbox{GAR}$ process (with mean 1) is defined by the stochastic process $\{Z_t\}_{t\in\mathbb N}$ satisfying the stochastic equation	$Z_t=\alpha\odot Z_{t-1}+\epsilon_t$,  $t\in\mathbb N$, $Z_0\sim \mbox{G}(\phi,\phi)$,	where the operator $\odot$ is defined by $\alpha\odot Z_{t-1}\stackrel{d}{=}\sum_{i=1}^{N_{t-1}}W_i$, with $N_{t-1}|Z_{t-1}=z\sim\mbox{Pois}(\alpha\rho z)$, $\{W_i\}_{i=1}^\infty\stackrel{iid}{\sim}\mbox{Exp}(\alpha)$ and  $\{\epsilon_i\}_{i=1}^\infty\stackrel{iid}{\sim}\mbox{G}(\alpha,\phi)$ are assumed to be independent and $\alpha=\dfrac{\phi}{1-\rho}$, for $\phi>0$ and $\rho\in(0,1)$.
\end{definition}

In the following proposition, we state some important results of the GAR(1) process due to \cite{sim1990}, except for the strongly mixing property which was established by \cite{bso2021}.

\begin{proposition} Let $\{Z_t\}_{t\in\mathbb N}\sim\mbox{GAR}(1)$. We have that  $\{Z_t\}_{t\in\mathbb N}$ is a strongly mixing and a stationary gamma process with marginal mean and variance 1 and $\phi^{-1}$, respectively. Further, we have that: 
	
	\noindent (i) The joint density function of $(Z_{t+j},Z_t)$ is given by
	\begin{eqnarray}\label{eq:jointZ}
	g(z_{t+j},z_t)=\dfrac{\phi^{\phi+1}}{(1-\rho^j)\Gamma(\phi)}\left(\dfrac{z_{t+j}z_t}{\rho^j}\right)^{\frac{\phi-1}{2}}
	\exp\left(-\phi\dfrac{z_{t+j}+z_t}{1-\rho^j}\right)I_{\phi-1}\left(2\phi\dfrac{\sqrt{\rho^j z_{t+j}z_t}}{1-\rho^j}\right),
	\end{eqnarray}
	for $z_{t+j},z_t>0$ and $t,j\in\mathbb N$, where $I_{\nu}(x)=\displaystyle\sum_{k=0}^\infty\dfrac{(x/2)^{2k+\nu}}{\Gamma(k+\nu+1)k!}$
	is the modified Bessel function of the first kind of order $\nu\in\mathbb R$, for $x\in\mathbb R$;
	
	\noindent (ii) The autocorrelation of the gamma $\mbox{AR}(1)$ process is $\mbox{corr}(Z_{t+j},Z_t)=\rho^j$ and the conditional moments are
	$E\left(Z_{t+j}^k|Z_t=z\right)=k!\left(\dfrac{1-\rho^j}{\phi}\right)^kL_k^{\phi-1}\left(-z\dfrac{\phi\rho^j}{1-\rho^j}\right)$,
	for $k\in\mathbb N$, where $L_n^{\nu}(x)=\dfrac{x^{-\nu}\exp(x)}{n!}\dfrac{d^n}{dx^n}\big\{x^{n+\nu}\exp(-x)\big\}$ is the generalized Laguerre polynomial of degree $n$, for $n\in\mathbb N$ and $x,\nu>0$. 
\end{proposition}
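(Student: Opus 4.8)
The plan is to reconstruct the GAR(1) process from its defining thinning representation and read off each claimed property in turn. The ingredients are the stochastic equation $Z_t=\alpha\odot Z_{t-1}+\epsilon_t$, the conditional Poisson mechanism behind $\odot$, and the fact that $Z_0\sim\mbox{G}(\phi,\phi)$; the key technical tool throughout is the moment generating function (mgf) and Laplace transform.

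First I would establish stationarity and the marginal. Conditioning on $Z_{t-1}=z$ and using $N_{t-1}\sim\mbox{Pois}(\alpha\rho z)$ with $\{W_i\}\stackrel{iid}{\sim}\mbox{Exp}(\alpha)$, a standard compound-Poisson computation gives $E(\exp\{s(\alpha\odot Z_{t-1})\}\mid Z_{t-1}=z)=\exp\{\alpha\rho z(\frac{\alpha}{\alpha-s}-1)\}=\exp\{z\cdot\frac{\rho\alpha s}{\alpha-s}\}$. Multiplying by the $\mbox{G}(\alpha,\phi)$ mgf of $\epsilon_t$, namely $(1-s/\alpha)^{-\phi}$, and taking expectation over $Z_{t-1}$ using the induction hypothesis that $Z_{t-1}\sim\mbox{G}(\phi,\phi)$ (so its mgf is $(1-s/\phi)^{-\phi}$ evaluated at $\frac{\rho\alpha s}{\alpha-s}$), one verifies after algebra using $\alpha=\phi/(1-\rho)$ that the product collapses to $(1-s/\phi)^{-\phi}$; hence $Z_t\sim\mbox{G}(\phi,\phi)$ and the process is (marginally) stationary with mean $1$ and variance $\phi^{-1}$. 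The strong mixing property I would simply cite from \cite{bso2021} as the excerpt permits.

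Next, for part (ii), the autocorrelation follows from the conditional-mean recursion: $E(Z_t\mid Z_{t-1}=z)=E(\alpha\odot Z_{t-1}\mid Z_{t-1}=z)+E(\epsilon_t)=\rho z+(1-\rho)$, so $E(Z_{t+j}\mid Z_t=z)=\rho^j z+(1-\rho^j)$, giving $\mbox{cov}(Z_{t+j},Z_t)=\rho^j\mbox{Var}(Z_t)$ and $\mbox{corr}=\rho^j$. For the higher conditional moments, I would work with the conditional mgf $E(\exp\{sZ_{t+j}\}\mid Z_t=z)$, which by iterating the one-step mgf computation above (the one-step transition is itself a "Poisson-thinning plus gamma noise" map) has the closed form $\left(1-\frac{s(1-\rho^j)}{\phi}\right)^{-\phi}\exp\left\{\frac{z\rho^j s}{1-s(1-\rho^j)/\phi}\right\}$; this is the mgf of a noncentral-gamma-type law. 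Expanding this in powers of $s$ and matching the known generating-function identity for generalized Laguerre polynomials, $\sum_{k\ge0}L_k^{\nu}(x)t^k=(1-t)^{-\nu-1}\exp\{-xt/(1-t)\}$, yields exactly $E(Z_{t+j}^k\mid Z_t=z)=k!\left(\frac{1-\rho^j}{\phi}\right)^kL_k^{\phi-1}\left(-z\frac{\phi\rho^j}{1-\rho^j}\right)$.

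Finally, for part (i), the cleanest route is to invert the conditional mgf just displayed to identify the transition density $g(z_{t+j}\mid z_t)$, then multiply by the $\mbox{G}(\phi,\phi)$ marginal of $z_t$. Recognizing that mgf as a Poisson mixture of gamma densities — condition on the latent Poisson count $M\sim\mbox{Pois}\!\left(z_t\phi\rho^j/(1-\rho^j)\right)$, and given $M=m$ let $Z_{t+j}\sim\mbox{G}\!\left(\phi/(1-\rho^j),\phi+m\right)$ — one writes $g(z_{t+j}\mid z_t)$ as a series over $m$, and the series $\sum_m \frac{(z_t\phi\rho^j/(1-\rho^j))^m}{m!}e^{-\cdots}\frac{(\phi/(1-\rho^j))^{\phi+m}}{\Gamma(\phi+m)}z_{t+j}^{\phi+m-1}e^{-\phi z_{t+j}/(1-\rho^j)}$ sums, via the defining series of $I_{\phi-1}$, to the Bessel-function form in \eqref{eq:jointZ}. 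Multiplying by $g(z_t)=\frac{\phi^\phi}{\Gamma(\phi)}z_t^{\phi-1}e^{-\phi z_t}$ and simplifying the powers of $\phi$, $\rho^j$, and the exponentials produces the stated joint density. The main obstacle I anticipate is purely bookkeeping: correctly tracking the normalizing constants and the $(z_{t+j}z_t/\rho^j)^{(\phi-1)/2}$ factor when converting the Poisson–gamma mixture series into the Bessel function, and making sure the $\alpha=\phi/(1-\rho)$ substitution is applied consistently so that all $\alpha$'s disappear in favor of $\phi$ and $\rho$.
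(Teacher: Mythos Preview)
Your proposal is sound, but you should know that the paper does not actually prove this proposition at all: the authors state it as a collection of known facts, attributing the stationarity, joint density, autocorrelation, and conditional moments to \cite{sim1990} and the strong mixing property to \cite{bso2021}, with no derivation given in the Appendix. So there is no ``paper's own proof'' to compare against beyond those citations.

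That said, your sketch is a correct and self-contained reconstruction of the Sim (1990) results. The mgf induction for stationarity, the conditional-mean recursion for the autocorrelation, the Laguerre generating-function identity for the conditional moments, and the Poisson--gamma mixture route to the Bessel form of the Kibble density are all the standard arguments and will go through. The one step that deserves an extra sentence of justification is the claim that the $j$-step conditional mgf has the same functional form as the one-step version with $\rho$ replaced by $\rho^j$; this is the semigroup property of the transition kernel, and it is most easily checked by verifying directly that composing two transitions with parameters $\rho_1,\rho_2$ yields one with parameter $\rho_1\rho_2$ (equivalently, that the noncentral-gamma family is closed under this convolution-plus-thinning operation). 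Once that is in hand, everything you wrote follows. Your anticipated bookkeeping obstacle in part (i) is real but routine; grouping the $m$-series as $\sum_m \frac{1}{m!\,\Gamma(\phi+m)}\bigl(\phi^2\rho^j z_{t+j}z_t/(1-\rho^j)^2\bigr)^m$ and matching against $I_{\phi-1}(x)=\sum_k (x/2)^{2k+\phi-1}/\{k!\,\Gamma(k+\phi)\}$ with $x=2\phi\sqrt{\rho^j z_{t+j}z_t}/(1-\rho^j)$ makes the $(z_{t+j}z_t/\rho^j)^{(\phi-1)/2}$ prefactor fall out cleanly.
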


\begin{remark}\label{rem_sto_rep}
	The joint density function (\ref{eq:jointZ}) corresponds to a Kibble bivariate gamma distribution \citep{kib1941}. From this fact and using the results from \cite{ilietal2005}, we obtain that $(Z_{t+j},Z_t)$ satisfies the stochastic representation 
	\begin{eqnarray}\label{sto_rep}
	Z_{t+j},Z_t|U=u\stackrel{iid}{\sim} G\left(\dfrac{\phi}{1-\rho^j},\phi+u\right),
	\end{eqnarray}
	where $U$ is a negative binomial random variable with probability function $P(U=u)=\dfrac{\Gamma(\phi+u)}{\Gamma(\phi)u!}(1-\rho^j)^\phi\rho^{ju}$, for $u=0,1,\hdots$. This result will be important in the development of a composite likelihood EM-algorithm approach in Section \ref{sec:pairwise_likelihood}.
\end{remark}	

 We are now ready to introduce our Student-$t$ process as follows.

\begin{definition} \label{def:tar}({\it New Student-$t$ SV model}) Assume $\{Z_t\}_{t\in\mathbb N}$ is a latent $\mbox{GAR}(1)$ process according Definition \ref{def:gar} with parameters $\phi=\dfrac{\nu}{2}$ and $\rho\in(0,1)$, where $\nu>0$.  We say that a process $\{Y_t\}_{t\in\mathbb N}$ is a New Stochastic Volatility model with Student-$t$ marginals (in short NSVt) if it is conditionally independent given the latent process $\{Z_t\}_{t\in\mathbb N}$ and satisfies the following stochastic representation:
	\begin{eqnarray*}
		Y_t|\{Z_j\}_{j\in\mathbb N}\sim N\left(\mu_t,Z_t^{-1}\sigma^2\right),\quad t\in\mathbb N,
	\end{eqnarray*}
	where $\mu_t={\bf x}_t^\top{\boldsymbol\beta}$, ${\bf x}_t$ is an observed $p$-dimensional covariate vector with $\boldsymbol\beta=(\beta_1,\ldots,\beta_p)^\top\in\mathbb R^p$ being the associated vector of regression coefficients, and $\sigma^2>0$. We denote $\{Y_t\}_{t\in\mathbb N}\sim\mbox{NSV}t$ which implicitly depends on the parameter vector $\boldsymbol\theta=(\boldsymbol\beta^\top,\sigma^2,\nu,\rho)^\top$.
\end{definition}

\begin{remark}
\cite{cre2017} also considered a stochastic volatility model constructed by a normal linear regression driven by the latent gamma autoregression. However, the latent process in their work enters the variance of the normal regression in a multiplicative way, thereby yielding normal gamma (also known as variance gamma) marginals. This is different from the model introduced in Definition \ref{def:tar}. In our case, the gamma autoregression enters the normal model in an ``inverse" way (inverse-gamma), which implies that our process has Student-$t$ marginals. 
\end{remark}

The parameter $\rho$ controls the dependence/persistence of volatility in our model while $\nu$ is responsible for the tail behavior. The scale of the process is controlled by the parameter $\sigma^2$. In Figure \ref{fig:trajectory1}, we plot simulated trajectories of the NSVt process $\{Y_t\}_{t\in\mathbb N}$ from Definition \ref{def:tar} for some values of the parameter $\nu$. Here we fix $n=200$ (sample size), $\sigma^2=1$, $\rho=0.8$, and the mean $\mu_t=0$ $\forall t$. As expected, we can observe more extreme observations for the cases $\nu=1$ and $\nu=2$ in comparison with the plots for $v=4$ and $\nu=6$. Additional plots for different choices of the parameter $\rho$ are given in the Supplementary Material.

\begin{figure*}
	\begin{center}
		\includegraphics[scale=0.5]{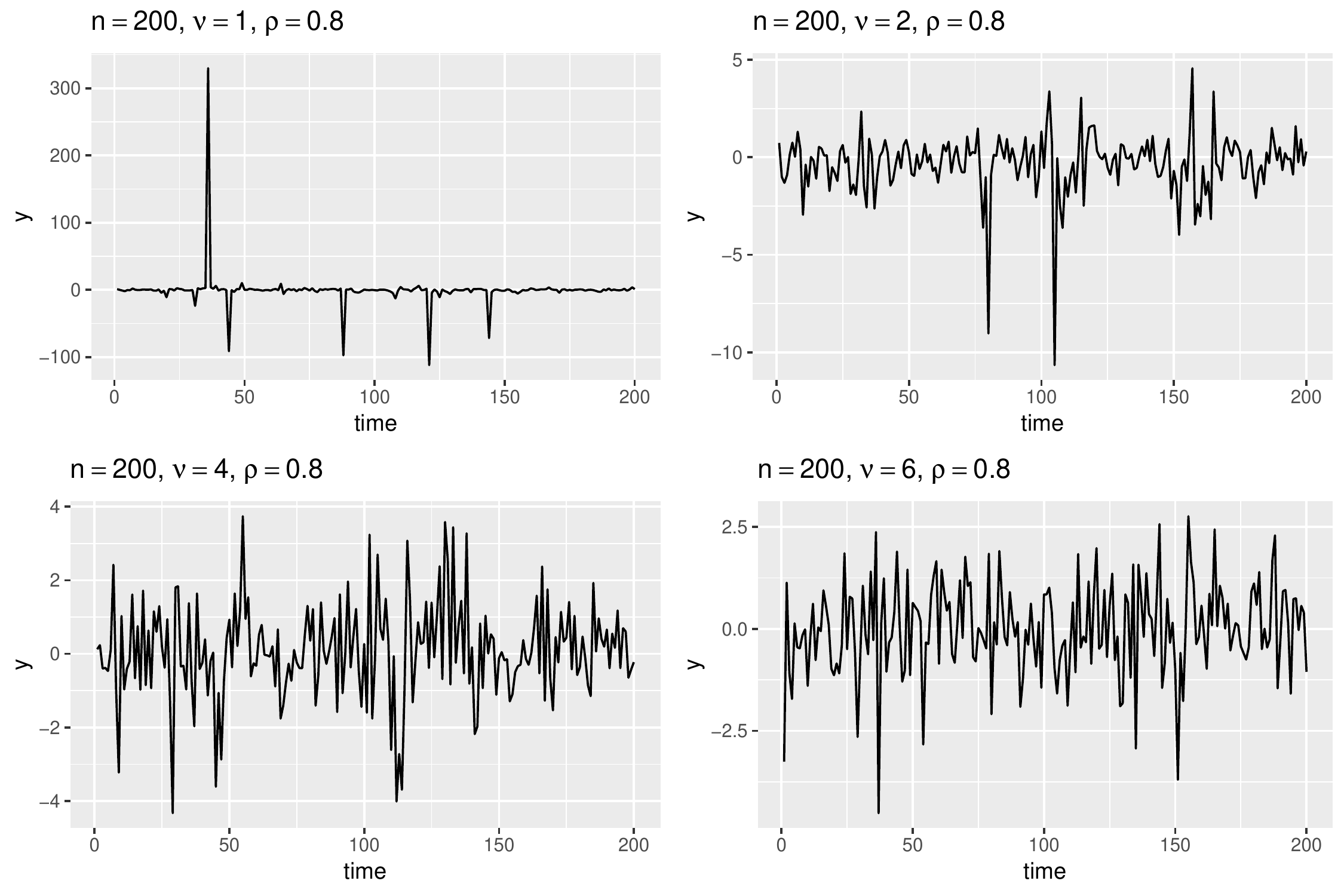}
		\caption{Plots of simulated trajectories of the NSVt process with $n=200$, $\sigma^2=1$, $\rho=0.8$, and $\mu_t=0$, for some values of $\nu$.  } \label{fig:trajectory1}
	\end{center}
\end{figure*}

\begin{remark}
	For inferential purposes, we will assume in Section \ref{sec:pairwise_likelihood} that $\{{\bf x}_t\}_{t\in\mathbb N}$ is the observed trajectory of a multivariate stochastic process $\{{\bf X}_t\}_{t\in\mathbb N}$ satisfying some desirable properties to ensure consistency and asymptotic normality of the proposed estimators as stated in the literature; see Section \ref{sec:pairwise_likelihood}. For the results in this section, this assumption is not necessary.
\end{remark}

The explicit expression for the joint density function of any pair of the NSVt process is provided in the next proposition, which plays a fundamental role in our purposes. The proofs of the results stated in this paper can be found in the Appendix. 

\begin{proposition}\label{p:jointdensity}
	Let $\{Y_t\}_{t\in\mathbb N}\sim\mbox{NSV}t$ according to Definition \ref{def:tar}. Then, for $j,t\in\mathbb N$, the joint density function of $(Y_{t+j},Y_t)$ is given by
	\begin{eqnarray*}
		f(y_{t+j},y_t)=\dfrac{(1-\rho^{j})^{\frac{\nu}{2}+1}}{\pi\nu\sigma^2}\left(\frac{\Gamma\left(\frac{\nu+1}{2}\right)}{\Gamma\left(\frac{\nu}{2}\right)}\right)^2\omega(y_{t+j},y_t)^{\frac{\nu+1}{2}}{_2}F_1\left(\dfrac{\nu+1}{2},\dfrac{\nu+1}{2};\dfrac{\nu}{2},\rho^j\omega(y_{t+j},y_t)\right),
	\end{eqnarray*}
	for $y_{t+j},y_t\in\mathbb R$, where $\omega(y_{t+j},y_t)\equiv\left(1+(1-\rho^j)\dfrac{(y_{t+j}-\mu_{t+j})^2}{\nu\sigma^2}\right)^{-1}
		\left(1+(1-\rho^j)\dfrac{(y_t-\mu_t)^2}{\nu\sigma^2}\right)^{-1}$
	and\\  ${_2F}_1(a_1,a_2;b_1,d)$ is the hypergeometric function\footnote{\url{https://mathworld.wolfram.com/GeneralizedHypergeometricFunction.html}.} defined for $a_1,a_2,b_1>0$ and $|d|<1$.
\end{proposition}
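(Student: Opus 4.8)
The plan is to obtain the joint density of $(Y_{t+j}, Y_t)$ by integrating out the latent pair $(Z_{t+j}, Z_t)$ against its known joint density, and to exploit the conditional-independence structure together with the stochastic representation in Remark \ref{rem_sto_rep} so that the double integral factorizes. First I would write
\[
f(y_{t+j},y_t)=\int_0^\infty\!\!\int_0^\infty \varphi\!\left(y_{t+j};\mu_{t+j},z_{t+j}^{-1}\sigma^2\right)\varphi\!\left(y_t;\mu_t,z_t^{-1}\sigma^2\right)g(z_{t+j},z_t)\,dz_{t+j}\,dz_t,
\]
where $\varphi(\cdot;m,v)$ is the $N(m,v)$ density and $g$ is the Kibble bivariate gamma density from \eqref{eq:jointZ}. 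Rather than attack the Bessel-function form of $g$ directly, I would substitute the mixture representation \eqref{sto_rep}: conditionally on the negative binomial variable $U=u$, the components $Z_{t+j},Z_t$ are i.i.d.\ $G\!\left(\frac{\phi}{1-\rho^j},\phi+u\right)$ with $\phi=\nu/2$. This turns the double integral into $E_U$ of a product of two identical one-dimensional integrals.

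The second step is the single-variable computation: for a generic $Z\sim G(\gamma,a)$ I would evaluate $\int_0^\infty \varphi(y;\mu,z^{-1}\sigma^2)\,\frac{\gamma^a}{\Gamma(a)}z^{a-1}e^{-\gamma z}\,dz$. Writing out the normal density, the integrand is proportional to $z^{a+1/2-1}\exp\{-z(\gamma+\frac{(y-\mu)^2}{2\sigma^2})\}$, so the integral is a Gamma normalizing constant and yields, up to constants, $\left(\gamma+\frac{(y-\mu)^2}{2\sigma^2}\right)^{-(a+1/2)}$; this is exactly the kernel of a (scaled) Student-$t$ density with $2a$ degrees of freedom. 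With $\gamma=\frac{\phi}{1-\rho^j}$ and $a=\phi+u$, pulling out $\gamma^{-(a+1/2)}$ produces the factor $\left(1+(1-\rho^j)\frac{(y-\mu)^2}{\nu\sigma^2}\right)^{-(\phi+u)-1/2}$, which is where the quantity $\omega(y_{t+j},y_t)$ is assembled once the two components are multiplied: each component contributes one of the two bracketed factors in $\omega$, raised to the power $\phi+u+\frac12=\frac{\nu+1}{2}+u$.

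The third step is to perform the sum over $u$. After factoring out the $u$-independent pieces, what remains is $\sum_{u\ge 0}\frac{\Gamma(\phi+u)}{\Gamma(\phi)u!}(1-\rho^j)^\phi\rho^{ju}\cdot C_u$, where the $u$-dependence of the integral contributes a factor $\frac{\Gamma(\phi+u+1/2)^2}{\Gamma(\phi+u)}$-type ratio times $\big(\rho^j\omega(y_{t+j},y_t)\big)^u$ (the $\rho^{ju}$ from the negative binomial weights combining with $\omega^u$ from the two integrals). Collecting the Pochhammer symbols $\Gamma(\phi+u+1/2)^2/\Gamma(\phi+u+1/2)\cdots$ and matching them against $\frac{(\frac{\nu+1}{2})_u(\frac{\nu+1}{2})_u}{(\frac{\nu}{2})_u\,u!}$, the series is recognized as ${}_2F_1\!\left(\frac{\nu+1}{2},\frac{\nu+1}{2};\frac{\nu}{2};\rho^j\omega(y_{t+j},y_t)\right)$, using $\phi=\nu/2$. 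A final bookkeeping of the constant prefactors — the $(1-\rho^j)^\phi$ from the negative binomial, the $\gamma^{a+1/2}$ Jacobian-type factors $(1-\rho^j)^{-(\phi+u+1/2)}$ whose $u$-part is absorbed into the series argument and whose $u$-free part leaves $(1-\rho^j)^{\nu/2+1}$, the $\sqrt{2\pi\sigma^2}$ and $\Gamma$ ratios — should reproduce the stated expression.

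The main obstacle I anticipate is the third step: correctly identifying the Gamma-function ratios coming out of the one-dimensional integrals and the negative binomial weights, and showing they telescope into the Pochhammer ratios $(\frac{\nu+1}{2})_u^2/(\frac{\nu}{2})_u/u!$ defining the ${}_2F_1$. This requires care with the half-integer shifts $\Gamma(\phi+u+\tfrac12)$ versus $\Gamma(\phi+u)$ and with convergence — one must check $|\rho^j\omega|<1$, which holds since $0<\rho<1$ and $0<\omega\le 1$. The normal-Gamma integral in the second step and the constant tracking in the last step are routine but error-prone, so I would organize the computation by first isolating all $u$-dependent factors, summing, and only then reconciling the overall constant against the claimed formula.
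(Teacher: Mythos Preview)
Your proposal is correct and leads to the same computation as the paper, but you frame it through the negative-binomial mixture representation of Remark~\ref{rem_sto_rep}, whereas the paper attacks the Bessel-function form of the Kibble density directly: it expands $I_{\phi-1}$ as its power series, interchanges sum and integral by monotone convergence, and integrates each term against the two normal densities to obtain gamma-kernel integrals $\Omega_k(y)=\Gamma(k+(\nu+1)/2)\big/\big(\tfrac{y^2}{2\sigma^2}+\tfrac{\nu}{2(1-\rho^j)}\big)^{k+(\nu+1)/2}$. The two routes are equivalent because the Bessel series index $k$ is precisely your mixture index $u$; indeed the identity behind Remark~\ref{rem_sto_rep} is exactly that series expansion. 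Your probabilistic framing has the advantage that the factorization into two one-dimensional integrals is immediate from conditional independence, and the negative-binomial weights already package the $\Gamma(\phi+u)/[\Gamma(\phi)u!]$ and $(1-\rho^j)^\phi\rho^{ju}$ factors, so the Pochhammer bookkeeping you flag as the main obstacle is slightly cleaner than in the paper's version; the paper's purely analytic route, on the other hand, avoids invoking Remark~\ref{rem_sto_rep} and keeps the argument self-contained from the density~\eqref{eq:jointZ}. Either way the final step---recognizing $\sum_{k\ge 0}\frac{\Gamma(k+(\nu+1)/2)^2}{\Gamma(k+\nu/2)\,k!}\,x^k$ as $\frac{\Gamma((\nu+1)/2)^2}{\Gamma(\nu/2)}\,{}_2F_1\!\big(\tfrac{\nu+1}{2},\tfrac{\nu+1}{2};\tfrac{\nu}{2};x\big)$---is identical.
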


 We conclude this section by obtaining the conditional variance of $Y_{t+j}$ given $Y_t$, which can be of practical interest for forecasting volatility.

\begin{proposition}\label{p:condvar}
	Let $\{Y_t\}_{t\in\mathbb N}\sim\mbox{NSV}t$. For $\nu>2$, the conditional variance of $Y_{t+j}$ given $Y_t$ is given by 
	\begin{eqnarray}\label{eq:condvar_expression}
		\mbox{Var}(Y_{t+j}|Y_t=y_t)=\dfrac{\sigma^2\nu\left(1-w_j(y_t)\right)^{(\nu+1)/2}}{2(1-\rho^j)\Gamma\left(\nu/2-1\right)}	
		{_1}F_1\left(\dfrac{\nu+1}{2},\dfrac{\nu}{2}-1;w_j(y_t)\right),	
	\end{eqnarray}
	where $w_j(y_t)\equiv\dfrac{\rho^j}{1+(1-\rho^j)(y_t-\mu_t)^2/(\nu\sigma^2)}$ and ${_1F}_1(\cdot,\cdot;\cdot)$ is the confluent hypergeometric function of the first kind, also known as Kummer's function\footnote{ \url{https://mathworld.wolfram.com/ConfluentHypergeometricFunctionoftheFirstKind.html}}.
\end{proposition}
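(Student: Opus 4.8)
The plan is to condition on the latent process and reduce the statement to an inverse moment of the GAR$(1)$ chain. Since the conditional mean $E(Y_{t+j}\,|\,\{Z_k\}_{k\in\mathbb N})=\mu_{t+j}$ does not depend on the latent path, the tower property gives $E(Y_{t+j}\,|\,Y_t=y_t)=\mu_{t+j}$; and by conditional independence together with the normal specification, $\mbox{Var}(Y_{t+j}\,|\,\{Z_k\},Y_t)=\sigma^2 Z_{t+j}^{-1}$ while $E(Y_{t+j}\,|\,\{Z_k\},Y_t)=\mu_{t+j}$. Hence the law of total variance (conditioning on the whole latent path and then on $Y_t$) collapses the problem to
\[
\mbox{Var}(Y_{t+j}\,|\,Y_t=y_t)=\sigma^2\,E\!\left(Z_{t+j}^{-1}\,\big|\,Y_t=y_t\right),
\]
so it remains to evaluate $E(Z_{t+j}^{-1}\,|\,Y_t=y_t)$.

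For this I would use the stochastic representation of Remark~\ref{rem_sto_rep}: introduce the negative binomial variable $U$ (with pmf depending on $\phi=\nu/2$ and $\rho^j$) for which $Z_{t+j},Z_t\,|\,U=u$ are iid $\mbox{G}\!\left(\frac{\phi}{1-\rho^j},\phi+u\right)$. Because $Z_{t+j}$ is independent of $(Z_t,Y_t)$ given $U$, and the reciprocal moment of a $\mbox{G}(\gamma,\kappa)$ law equals $\gamma/(\kappa-1)$, we get
\[
E\!\left(Z_{t+j}^{-1}\,\big|\,Y_t=y_t\right)=\sum_{u\ge 0}P(U=u\,|\,Y_t=y_t)\,E\!\left(Z_{t+j}^{-1}\,\big|\,U=u\right)=\frac{\phi}{1-\rho^j}\sum_{u\ge 0}\frac{P(U=u\,|\,Y_t=y_t)}{\phi+u-1}.
\]
This is precisely where the hypothesis $\nu>2$ (equivalently $\phi>1$) is used: it makes the $u=0$ term, which has positive posterior probability, finite, hence the whole expression finite.

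Next I would get the posterior of $U$ by Bayes' rule, $P(U=u\,|\,Y_t=y_t)\propto P(U=u)\int_0^\infty p(y_t\,|\,z)\,g(z\,|\,U=u)\,dz$, where $p(y_t\,|\,z)$ is the $N(\mu_t,z^{-1}\sigma^2)$ density and $g(\cdot\,|\,U=u)$ the $\mbox{G}(\frac{\phi}{1-\rho^j},\phi+u)$ density. The inner Gaussian--gamma integral is elementary and gives a ratio of gamma functions; after cancelling the factors free of $u$, the weights are proportional to $\frac{\Gamma(\phi+u+\frac12)}{u!}\left(\frac{c\rho^j}{c+d}\right)^u$ with $c=\frac{\phi}{1-\rho^j}$ and $d=\frac{(y_t-\mu_t)^2}{2\sigma^2}$. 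The crucial step is that $\frac{c\rho^j}{c+d}$ simplifies, using $\phi=\nu/2$, to exactly $w_j(y_t)$; hence $U\,|\,Y_t=y_t$ has pmf proportional to $\frac{((\nu+1)/2)_u}{u!}\,w_j(y_t)^u$, with normalising constant $\left(1-w_j(y_t)\right)^{(\nu+1)/2}$ by the binomial series $\sum_u\frac{(a)_u}{u!}x^u=(1-x)^{-a}$.

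Finally, substituting these weights into the sum and writing $\frac{1}{\phi+u-1}=\frac{1}{\phi-1}\frac{(\phi-1)_u}{(\phi)_u}$ (equivalently $\frac{1}{\phi+u-1}=\int_0^1 t^{\phi+u-2}\,dt$) turns the series into a hypergeometric series in $w_j(y_t)$; collecting $\sigma^2$, $\frac{\phi}{1-\rho^j}$, $\frac{1}{\phi-1}$, the factor $(1-w_j(y_t))^{(\nu+1)/2}$, and using $\phi=\nu/2$ reproduces the displayed formula, a Pfaff/Euler transformation of the Gaussian hypergeometric being what casts it in the stated confluent-hypergeometric form. (An essentially equivalent route is to first compute $E(Z_{t+j}^{-1}\,|\,Z_t=z)$, which is itself a confluent hypergeometric in $z$ coming from the compound-Poisson transition of the GAR$(1)$ chain at lag $j$, and then integrate it against the gamma posterior $Z_t\,|\,Y_t=y_t$ via the standard Laplace-transform identity for ${_1}F_1$.) The main difficulty is not conceptual but bookkeeping --- carrying $\phi=\nu/2$ through the Gaussian--gamma integral so that the hypergeometric argument collapses exactly to $w_j(y_t)$ and the prefactor matches the stated constant --- together with the routine integrability conditions that rely on $\nu>2$; as a partial check, the formula should reduce to the marginal Student-$t$ variance $\sigma^2\nu/(\nu-2)$ when $\rho^j\to0$.
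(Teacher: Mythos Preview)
Your reduction to $\sigma^2\,E(Z_{t+j}^{-1}\mid Y_t)$ and your computation via the Kibble negative-binomial mixture of Remark~\ref{rem_sto_rep} are correct and constitute a genuinely different route from the paper. The paper instead establishes two auxiliary lemmas: the conditional Laplace transform of $Z_{t+j}$ given $Z_t$ (from the Bessel series for the GAR transition density) and hence $E(Z_{t+j}^{-1}\mid Z_t=z)$ as a Poisson-type series in $z$; it then integrates this term by term against the gamma posterior $Z_t\mid Y_t\sim G\bigl(\tfrac{\nu}{2}+\tfrac{(y_t-\mu_t)^2}{2\sigma^2},\tfrac{\nu+1}{2}\bigr)$. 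Your approach bypasses both lemmas by pushing all the mixing onto $U$, reading off $E(Z_{t+j}^{-1}\mid U=u)$ as a gamma inverse moment, and computing $P(U=u\mid Y_t)$ directly; this is shorter and makes the role of the hypothesis $\nu>2$ (finiteness of the $u=0$ term) more transparent. Both routes land on the same series $\sum_{k\ge0}\frac{((\nu+1)/2)_k}{k!\,(\nu/2+k-1)}\,w_j(y_t)^k$.

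There is one genuine slip at the very end. Your appeal to a ``Pfaff/Euler transformation'' to reach the stated ${}_1F_1$ form cannot work: Pfaff and Euler identities send ${}_2F_1$'s to other ${}_2F_1$'s, never to confluent hypergeometrics. After inserting $\frac{1}{\nu/2+k-1}=\frac{1}{\nu/2-1}\cdot\frac{(\nu/2-1)_k}{(\nu/2)_k}$, both your route and the paper's route give
\[
\mbox{Var}(Y_{t+j}\mid Y_t=y_t)=\frac{\sigma^2\nu\,(1-w_j(y_t))^{(\nu+1)/2}}{2(1-\rho^j)(\nu/2-1)}\;{}_2F_1\!\left(\tfrac{\nu+1}{2},\tfrac{\nu}{2}-1;\tfrac{\nu}{2};\,w_j(y_t)\right),
\]
which is a Gauss, not a Kummer, hypergeometric. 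Your own sanity check confirms this: at $\rho^j\to0$ the above returns $\sigma^2\nu/(\nu-2)$, whereas the displayed formula with $\Gamma(\nu/2-1)$ in the denominator would not. So stop at the ${}_2F_1$ rather than trying to coerce a confluent form; the ${}_1F_1$ (and the $\Gamma$) in the statement appear to be typographical, and the paper's own proof sketch, which ends with ``after some basic algebraic manipulations,'' does not actually produce them either.
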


\section{Composite likelihood inference} \label{sec:pairwise_likelihood}

Here we describe the estimation of parameters in our proposed Student-$t$ stochastic volatility model via composite likelihood approach \citep{varetal11}. Let $y_1,\ldots,y_n$ be an observed trajectory of the NSVt process. With the aid of an explicit form for the joint density function derived in Proposition \ref{p:jointdensity}, we are able to perform composite/pairwise likelihood inference for our proposed Student-$t$ SV model. The pairwise likelihood function of order $m\in\mathbb N$ is given by   
\begin{eqnarray} \label{eq:cl_likelihood}
\mbox{PL}^{(m)}(\boldsymbol\theta)=\prod_{t=m+1}^n\prod_{i=1}^m f(y_t,y_{t-i}),
\end{eqnarray}
where $f(\cdot,\cdot)$ is the joint density function in Proposition \ref{p:jointdensity}. Hence, the log-pairwise likelihood function $\ell^{(m)}(\boldsymbol\theta)\equiv \log\mbox{PL}^{(m)}(\boldsymbol\theta)$ assumes the form
{\small \begin{eqnarray} \label{eq:cl_log_likelihood}
&&\ell^{(m)}(\boldsymbol\theta)\propto  (n-m)m\bigg\{2\left(\log\Gamma\left(\dfrac{\nu+1}{2}\right)-\log\Gamma\left(\dfrac{\nu}{2}\right)\right) -\log\nu-\log\sigma^2\bigg\}+ (n-m)\left(\dfrac{\nu}{2}+1\right)\times \notag \\
&&\sum_{i=1}^m\log(1-\rho^i) +\sum_{t=m+1}^n\sum_{i=1}^m\bigg\{\dfrac{\nu+1}{2} \log \omega(y_t,y_{t-i})+ \log {_2}F_1\left(\dfrac{\nu+1}{2},\dfrac{\nu+1}{2};\dfrac{\nu}{2},\rho^i\omega(y_t,y_{t-i})\right)\bigg\}.
\end{eqnarray}}

The pairwise likelihood estimator is obtained by $\widehat{\boldsymbol\theta}=\mbox{argmax}_{\boldsymbol\theta}\,\ell^{(m)}(\boldsymbol\theta)$. To do this in practice, we need some optimization procedures such as Newton-Raphson or BFGS methods. In this paper, we use the \texttt{optim} package from the \texttt{R} program for this purpose. The hypergeometric function in the above log-pairwise likelihood function is implemented through the \texttt{hypergeo} package in \texttt{R}.

Initial guesses for the above optimization problem can be obtained through a Student-$t$ regression model. More precisely, since the marginals of our response process are Student-$t$ distributed, we can consider the following pseudo log-likelihood function to obtain initial guesses for $\boldsymbol\beta$, $\sigma^2$ and $\nu$:
\begin{eqnarray} \label{eq:student_t_initial_guess_lkl}
\widetilde\ell(\boldsymbol\beta,\sigma^2,\nu)\propto n\bigg\{ \log\Gamma\left(\dfrac{\nu+1}{2}\right)-\log\Gamma\left(\dfrac{\nu}{2}\right)-  \dfrac{1}{2}\log\sigma^2-\dfrac{1}{2}\log\nu\bigg\}-\dfrac{\nu+1}{2}\sum_{t=1}^n\log\left(1+\dfrac{(y_t-\mu_t)^2}{\nu\sigma^2}\right).
\end{eqnarray}
For the parameter $\rho$, we recommend using a grid of values in $(0,1)$ as possible initial guesses.

The standard errors of the composite likelihood estimates can be obtained through the \cite{godambe1960} information given by ${\bf H}(\boldsymbol\theta)^\top{\bf J}(\boldsymbol\theta)^{-1}{\bf H}(\boldsymbol\theta)$, where ${\bf H}(\boldsymbol\theta)=-E\left(\dfrac{\partial^2\ell^{(m)}(\boldsymbol\theta)}{\partial\boldsymbol\theta\partial\boldsymbol\theta^\top}\right)$ and ${\bf J}(\boldsymbol\theta)=\mbox{var}\left(\dfrac{\partial\ell^{(m)}(\boldsymbol\theta)}{\partial\boldsymbol\theta}\right)$. These matrices are hard to compute and are further cumbersome since they involve the first and second order derivatives of the hypergeometric function. Therefore, we recommend a parametric bootstrap to obtain standard errors and confidence intervals of the parameter estimates. This is further explored in finite sample situations in Section \ref{sec:application}.  

One remarkable feature of using the composite likelihood estimation based on (\ref{eq:cl_log_likelihood}) is the low computational cost. This is very important for practitioners dealing with high-frequency financial data where the ability to make shorter horizon portfolio decisions can be extremely useful. For instance, \cite{liu2009} and \cite{haut2015} discuss the importance of high-frequency financial data in assisting managers to make portfolio decisions over a shorter horizon, such as over a few days, as opposed to making similar decisions over longer periods. On the other hand, some numerical issues might be experienced in the optimization procedure of the direct maximization of the composite log-likelihood function. To overcome such a problem, we now propose a composite likelihood EM-algorithm approach.

\subsection{Composite likelihood EM-algorithm} \label{sec:clem_section}

The maximization of the log-pairwise likelihood function in \eqref{eq:cl_log_likelihood} might pose computational difficulties due to the hypergeometric function. As an alternative, we propose a composite likelihood EM (CLEM) algorithm \citep{gaoson2011} for estimating model parameters, which is one of the variants of the classical EM-algorithm by \cite{demetal1977}. We start by discussing the E-step associated with our CLEM technique. Here, we will assume that $\nu$ is known. Otherwise, we will not be able to provide a closed-form E-step for our algorithm. In Remark \ref{nu_known}, we provide some approaches for dealing with this assumption in practice. 

\subsubsection*{CL E-step}

Without loss of generality, consider the pair $(Y_t,Y_{t-i})$. A natural choice for the latent effects is the pair $(Z_t,Z_{t-i})$, but its joint density function involves a Bessel function, which makes it hard to obtain an explicit form for the E-step of the algorithm, and further complicates the maximization of the corresponding function in the M-step. A key idea to overcome this problem is to use the stochastic representation given in (\ref{sto_rep}). Besides, we consider the negative binomial distribution in Remark \ref{rem_sto_rep}, represented as a mixed Poisson law.

The observable time series pair is ${\bf Y}^{obs}_{ti}=(Y_t,Y_{t-i})$ and the complete data is ${\bf Y}^{c}_{ti}=(Y_t,Y_{t-i},Z_{t1},Z_{t-i\,1},Z_{t2},\\Z_{t-i\,2},U_{ti})$, where $Z_t=Z_{t1}+Z_{t2}$, $Z_{t-i}=Z_{t-i\,1}+Z_{t-i\,2}$, with $Z_{t1},Z_{t2},Z_{t-i\,1},Z_{t-i\,2}$ all conditionally independent given $U_{ti}$. We have that the following hierarchical representation holds:
\begin{eqnarray*}\label{hierarchical}
&&\begin{pmatrix}
	Y_t\\
	Y_{t-i}
\end{pmatrix}\bigg| \begin{pmatrix}
Z_t\\
Z_{t-i}
\end{pmatrix}\sim N_2
\begin{pmatrix}
	\begin{pmatrix}
		\mu_t\\
		\mu_{t-i}
	\end{pmatrix}\!\!,&
	\sigma^2\begin{pmatrix}
		Z_t^{-1} & 0 \\
		0 & Z_{t-i}^{-1}
	\end{pmatrix}
\end{pmatrix},\nonumber\\
&&(Z_{t1},Z_{t-i\,1})|U_{ti}\stackrel{iid}{\sim} G\left(\dfrac{\nu}{2(1-\rho^i)},\dfrac{\nu}{2}\right),\nonumber\\
&&(Z_{t2},Z_{t-i\,2})|U_{ti}\stackrel{iid}{\sim} G\left(\dfrac{\nu}{2(1-\rho^i)},U_{ti}\right),\nonumber\\
&&U_{ti}|\Omega_{ti}\sim \mbox{Pois}\left(\dfrac{\rho^i}{1-\rho^i}\Omega_{ti}\right),\,
\Omega_{ti}\sim G\left(1,\dfrac{\nu}{2}\right).
\end{eqnarray*}

The CL-E step for a single ${\bf Y}^{c}_{ti}$ is given by the $Q_{ti}$-function: 
\begin{eqnarray*}
Q_{ti}(\boldsymbol\theta;\boldsymbol\theta^{(r)})=\int_{\mathcal S_{{\bf y}^{lat}_{ti}}} \log f({\bf y}^{c}_{ti};\boldsymbol\theta) dF({\bf y}^{c}_{ti}|{\bf y}^{obs}_{ti};\boldsymbol\theta^{(r)}),
\end{eqnarray*}	
where $\boldsymbol\theta^{(r)}$ denotes the estimate of $\boldsymbol\theta$ in the $r$th loop of the CLEM algorithm, $\mathcal S_{{\bf y}^{lat}_{ti}}$ is the support of the latent random vector ${\bf Y}^{lat}_{ti}=(Z_{t1},Z_{t-i\,1},Z_{t2},Z_{t-i\,2},U_{ti})$, $f({\bf y}^{c}_{ti};\boldsymbol\theta)$ is the joint density function of ${\bf Y}^{obs}_{ti}$, $F({\bf y}^{c}_{ti}|{\bf y}^{obs}_{ti};\boldsymbol\theta^{(r)})$ is the conditional distribution function of ${\bf Y}^{c}_{ti}$ given ${\bf Y}^{obs}_{ti}={\bf y}^{obs}_{ti}$ evaluated at $\boldsymbol\theta^{(r)}$, for $t=m+1,\ldots,n$, and $i=1,\ldots,m$.

The $Q_{ti}$-function can be decomposed as $Q_{ti}(\boldsymbol\theta;\boldsymbol\theta^{(r)})=Q^{(1)}_{ti}(\boldsymbol \beta,\sigma^2;\boldsymbol\theta^{(r)})+Q^{(2)}_{ti}(\nu,\rho;\boldsymbol\theta^{(r)})$, with
\begin{eqnarray*}
Q^{(1)}_{ti}(\boldsymbol \beta,\sigma^2;\boldsymbol\theta^{(r)})&\propto&
 -\log\sigma^2-\dfrac{1}{2\sigma^2}\bigg\{ \zeta_{ti}^{(r)}(Y_t-\mu_t)^2+\zeta_{t-i\, i}^{(r)}(Y_{t-i}-\mu_{t-i})^2\bigg\},\\
Q^{(2)}_{ti}(\nu,\rho;\boldsymbol\theta^{(r)})&\propto&
		-(2\tau^{(r)}_{ti}+\nu/2)\log\left(1-\rho^i\right)-\dfrac{\nu(\zeta^{(r)}_{ti\,1}+\zeta^{(r)}_{ti\,2})}{2(1-\rho^i)}+\tau_{ti}^{(r)}\log \rho^i,
\end{eqnarray*}
where $\zeta^{(r)}_{ti\,1}=E(Z_t|{\bf Y}^{obs}_{ti};\boldsymbol\theta^{(r)})$, $\zeta^{(r)}_{ti\,2}=E(Z_{t-i}|{\bf Y}^{obs}_{ti};\boldsymbol\theta^{(r)})$, and $\tau_{ti}^{(r)}=E(U_{ti}|{\bf Y}^{obs}_{ti};\boldsymbol\theta^{(r)})$ are the conditional expectations required for our proposed CLEM algorithm. Closed-forms for these conditional expectations are provided in the next proposition; its proof is given in the Appendix. For simplicity of notation, the superscript $(r)$ is omitted.

\begin{proposition}\label{E-step}
	The conditional expectations involved in our proposed CLEM-algorithm are given by
	\begin{eqnarray*}
		\zeta_{ti\,1}=\dfrac{2(1-\rho^i)^{\frac{\nu}{2}+2}\Gamma\left(\dfrac{\nu+1}{2}\right)\Gamma\left(\dfrac{\nu+3}{2}\right)}{\pi\nu^2\sigma^2{\Gamma\left(\dfrac{\nu}{2}\right)}^2}\dfrac{_2F_1\left(\dfrac{\nu+1}{2},\dfrac{\nu+3}{2};\dfrac{\nu}{2};\dfrac{\rho^i}{v_i(y^*_t)v_i(y^*_{t-i})}\right)}{v_i(y^*_t)^\frac{\nu+3}{2}v_i(y^*_{t-i})^\frac{\nu+1}{2}f(y_t,y_{t-i})},	
	\end{eqnarray*}	
	\begin{eqnarray*}
		\zeta_{ti\,2}=\dfrac{2(1-\rho^i)^{\frac{\nu}{2}+2}\Gamma\left(\dfrac{\nu+1}{2}\right)\Gamma\left(\dfrac{\nu+3}{2}\right)}{\pi\nu^2\sigma^2{\Gamma\left(\dfrac{\nu}{2}\right)}^2}\dfrac{_2F_1\left(\dfrac{\nu+1}{2},\dfrac{\nu+3}{2};\dfrac{\nu}{2};\dfrac{\rho^i}{v_i(y^*_t)v_i(y^*_{t-i})}\right)}{v_i(y^*_t)^\frac{\nu+1}{2}v_i(y^*_{t-i})^\frac{\nu+3}{2}f(y_t,y_{t-i})},	
	\end{eqnarray*}	
	and 
	\begin{eqnarray*}
		\tau_{ti}=\dfrac{2\rho^i(1-\rho^i)^{\frac{\nu}{2}+1}\Gamma\left(\dfrac{\nu+3}{2}\right)^2}{\pi\nu^2\sigma^2{\Gamma\left(\dfrac{\nu}{2}\right)^2}}\dfrac{_2F_1\left(\dfrac{\nu+3}{2},\dfrac{\nu+3}{2};\dfrac{\nu}{2}+1;\dfrac{\rho^i}{v_i(y^*_t)v_i(y^*_{t-i})}\right)}{\left(v_i(y^*_t)v_i(y^*_{t-i})\right)^\frac{\nu+3}{2}f(y_t,y_{t-i})},	
	\end{eqnarray*}	
	where $_2F_1(\cdot,\cdot;\cdot,\cdot)$ is the hypergeometric function previously defined, $f(y_t,y_{t-i})$ is given in Proposition \ref{p:jointdensity}, $y^*_t=y_t-\mu_t$, and $v_i(y^*_t)=1+\dfrac{1-\rho^i}{\nu\sigma^2}{y^*_t}^2$.
\end{proposition}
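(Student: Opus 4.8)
\textbf{Proof proposal for Proposition \ref{E-step}.}

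The plan is to compute each conditional expectation by exploiting the hierarchical representation following Remark \ref{rem_sto_rep} and Definition \ref{def:tar}, reducing everything to integrals against the bivariate gamma mixing structure. First I would observe that, by the stochastic representation \eqref{sto_rep}, conditionally on $U_{ti}=u$ the pair $(Z_t,Z_{t-i})$ is a pair of independent $G\!\left(\frac{\nu}{2(1-\rho^i)},\frac{\nu}{2}+u\right)$ variates, and each $Y_s$ given $Z_s$ is $N(\mu_s,Z_s^{-1}\sigma^2)$. Writing the joint density of $({\bf Y}^{obs}_{ti},Z_t,Z_{t-i},U_{ti})$ explicitly and integrating out $Z_t,Z_{t-i}$ against the Gaussian likelihood, one recognizes that each $Z_s$-integral is a gamma integral: $\int_0^\infty z^{a-1/2}e^{-z\,c_s}dz$ type, where $c_s=\frac{\nu}{2(1-\rho^i)}+\frac{(y_s-\mu_s)^2}{2\sigma^2}=\frac{\nu}{2(1-\rho^i)}v_i(y_s^*)$. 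This is exactly the computation that produces $f(y_t,y_{t-i})$ in Proposition \ref{p:jointdensity} once one sums the resulting series over $u$ with the negative-binomial weights; I would reuse that derivation as a template.

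Next, for $\zeta_{ti\,1}=E(Z_t\mid {\bf Y}^{obs}_{ti})$, the Bayes-rule numerator is the same integral but with one extra factor of $z_t$, which raises the gamma exponent by one and therefore shifts the shape parameter from $\frac{\nu}{2}+u$ to $\frac{\nu}{2}+u+1$ in that one coordinate. Carrying this through: the $z_t$-integral contributes an extra factor $\frac{\frac{\nu}{2}+u+ \frac12}{c_t}$ relative to before — more precisely $\Gamma$-ratios that convert $\frac{\nu+1}{2}$ into $\frac{\nu+3}{2}$ in the relevant place and insert an extra $v_i(y^*_t)^{-1}$ and an extra $(1-\rho^i)$ — and after re-summing the $u$-series (which again yields a ${_2}F_1$, now with parameters $\frac{\nu+1}{2},\frac{\nu+3}{2};\frac{\nu}{2}$) and dividing by $f(y_t,y_{t-i})$ one obtains the stated closed form. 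The computation for $\zeta_{ti\,2}$ is identical with the roles of $t$ and $t-i$ swapped, which explains why the two formulas differ only by interchanging the exponents $\frac{\nu+3}{2}$ and $\frac{\nu+1}{2}$ on $v_i(y^*_t)$ and $v_i(y^*_{t-i})$. For $\tau_{ti}=E(U_{ti}\mid{\bf Y}^{obs}_{ti})$, the numerator carries an extra factor $u$ against the negative-binomial weights $\frac{\Gamma(\frac{\nu}{2}+u)}{\Gamma(\frac{\nu}{2})u!}(1-\rho^i)^{\nu/2}\rho^{iu}$; using $u\cdot\frac{1}{u!}=\frac{1}{(u-1)!}$ and reindexing $u\mapsto u+1$ shifts both gamma shape parameters up by one (hence $\Gamma(\frac{\nu+3}{2})^2$ and the parameter $\frac{\nu}{2}+1$ in the resulting ${_2}F_1$) and pulls out the prefactor $\rho^i$, which matches the claimed expression.

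The main obstacle I anticipate is bookkeeping the constants and the series manipulation rather than any conceptual difficulty: one must keep careful track of how shifting a gamma shape parameter by one unit redistributes powers of $(1-\rho^i)$, $\nu$, $\sigma^2$, and the $v_i(\cdot)$ factors, and must correctly identify the resummed hypergeometric series — in particular verifying that the argument of each ${_2}F_1$ remains $\rho^i/\big(v_i(y^*_t)v_i(y^*_{t-i})\big)$, i.e. $\rho^i\omega(y_t,y_{t-i})$ in the notation of Proposition \ref{p:jointdensity}, since $\omega=1/(v_i(y^*_t)v_i(y^*_{t-i}))$. A clean way to organize this is to define, for parameters $(a,b)$, the auxiliary quantity $\int_0^\infty\!\!\int_0^\infty z_t^{a-1}z_{t-i}^{b-1}$ times the Gaussian-kernel-and-gamma integrand, sum over $u$ once in full generality to get a ${_2}F_1$ depending on $(a,b)$, and then read off $f$, $\zeta_{ti\,1}$, $\zeta_{ti\,2}$, $\tau_{ti}$ as the special cases $(a,b)=(\frac{\nu}{2},\frac{\nu}{2})$ shifted appropriately; dividing by the $(a,b)=(\frac{\nu}{2},\frac{\nu}{2})$ case (which is $\pi$ times $f$ up to the stated constants) then gives the ratios displayed in the proposition. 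I would relegate the explicit constant-chasing to the Appendix as indicated.
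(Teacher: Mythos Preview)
Your proposal is correct and follows essentially the same approach as the paper: Bayes' rule gives the conditional density of $(Z_t,Z_{t-i},U_{ti})$ given ${\bf Y}^{obs}_{ti}$, conditional independence given $U_{ti}$ factorizes the $z$-integrals into gamma kernels (with one shape parameter shifted for $\zeta_{ti\,1}$, $\zeta_{ti\,2}$, and the index shifted after the $u\mapsto u+1$ reindexing for $\tau_{ti}$), and summing over $u$ against the negative-binomial weights recovers a ${_2}F_1$ with the stated parameters, all divided by $f(y_t,y_{t-i})$. Your unified $(a,b)$-auxiliary-integral device is a tidy way to organize the constant-chasing, but the underlying argument is the same as the paper's.
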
	

The $Q$-function taking into account all the observations is then given by
\begin{eqnarray}\label{Qfunction}
Q(\boldsymbol\theta;\boldsymbol\theta^{(r)})=\sum_{t=m+1}^n\sum_{i=1}^mQ_{ti}(\boldsymbol\theta;\boldsymbol\theta^{(r)}).
\end{eqnarray}

With the $Q$-function completely specified, we are now able to perform the M-step of the CLEM algorithm.

\subsubsection*{CL M-step}

The CL-M step of the proposed algorithm consists of maximizing the $Q$-function given in (\ref{Qfunction}). It is worth noting that due to the decomposition of the $Q_{ti}$-function, the parameters $(\boldsymbol\beta,\sigma^2)$ are estimated independently from $\rho$ in each step of the CLEM-algorithm.

By taking $\partial Q(\boldsymbol\theta;\boldsymbol\theta^{(r)})/\partial (\boldsymbol\beta,\sigma^2)=0$, we obtain explicit solutions for the CLEM estimators for $(\boldsymbol\beta,\sigma^2)$ in the $(r+1)$th loop of the algorithm:
\begin{eqnarray}\label{clem_beta}
\sum_{t=m+1}^n\sum_{i=1}^m\zeta_{t-i\,i}^{(r)}(Y_{t-i}-{\bf x}^\top_{t-i}\boldsymbol\beta^{(r+1)})x_{t-i\,j}=-\sum_{t=m+1}^n\sum_{i=1}^m\zeta_{ti}^{(r)}(Y_t-{\bf x}^\top_{t}\boldsymbol\beta^{(r+1)})x_{tj},\quad j=1,\ldots,p,\nonumber
\end{eqnarray}
and
\begin{eqnarray}\label{clem_sigma2}
{\sigma^2}^{(r+1)}=\dfrac{1}{2m(n-m)}\bigg\{\sum_{t=m+1}^n\sum_{i=1}^m\zeta_{ti}^{(r)}(Y_t-{\bf x}^\top_{t}\boldsymbol\beta^{(r+1)})^2+\sum_{t=m+1}^n\sum_{i=1}^m\zeta_{t-i\,i}^{(r)}(Y_{t-i}-{\bf x}^\top_{t-i}\boldsymbol\beta^{(r+1)})^2
\bigg\}.
\end{eqnarray}

The maximization of the $Q$-function with respect to the parameter $\rho$ has associated score function given by 
\begin{eqnarray}\label{clem_rho}
\dfrac{\partial Q(\boldsymbol\theta;\boldsymbol\theta^{(r)})}{\partial\rho}=\sum_{t=m+1}^n\sum_{i=1}^m\dfrac{i\rho^{i-1}}{1-\rho^i}\bigg\{\nu+\tau^{(r)}_{ti}\left(2+\dfrac{1}{\rho^i}\right)-\dfrac{\nu(\zeta^{(r)}_{ti}+\zeta^{(r)}_{t-i\,i})}{2(1-\rho^i)}-\dfrac{\eta^{(r)}_{ti}}{1-\rho^i}
\bigg\}.
\end{eqnarray}	

\begin{remark}
	A closed-form solution for $\boldsymbol\beta^{(r+1)}$ can be obtained by writing (\ref{clem_beta}) in a matrix form and, consequently, we also obtain ${\sigma^2}^{(r+1)}$ explicitly from (\ref{clem_sigma2}). Regarding the estimation of $\rho$, note that the optimization necessary to find the root of the gradient function (\ref{clem_rho}) is much simpler (it does not involve any complicated function) than the one in the direct maximization of (\ref{eq:cl_log_likelihood}). In particular, for the case $m=1$ (considered in both simulated and real data in this paper), the problem relies on finding the admissible root of a quadratic polynomial. This illustrates the drastic complexity reduction of using the proposed CLEM algorithm when facing numerical issues in the direct maximization of the composite likelihood function in (\ref{eq:cl_log_likelihood}).
\end{remark}

With the expectation and maximization steps described above, the CLEM algorithm is performed by iterating both steps until some suitable convergence criterion is achieved. The complete algorithm including the stopping criteria is given in Algorithm \ref{alg:clem}. 

\begin{remark}\label{nu_known}
	Note that in our CLEM algorithm we assume the $\nu$ parameter as known and provide steps to estimate the remaining parameters. In the finite sample situations explored in Sections \ref{sec:simulation} and \ref{sec:application}, we observed that the $\nu$ parameter is estimated very well by (a) direct optimization of the composite likelihood in \eqref{eq:cl_log_likelihood} and (b) initial guess obtained by maximizing \eqref{eq:student_t_initial_guess_lkl}. In our data analysis in Section \ref{sec:application}, we present estimation results of the $\nu$ parameter based on the above two techniques.   
\end{remark}

\begin{algorithm}
\footnotesize
	\caption{CLEM algorithm for the NSVt model}
	\label{alg:clem}
	\begin{enumerate}
		\item Start with some initial guess for the parameter vector, say $\boldsymbol\theta^{(0)}$.
		\item (CL E-step) Update the conditional expectations $\zeta^{(r)}_{ti\,1}$, $\zeta^{(r)}_{ti\,2}$, and $\tau_{ti}^{(r)}$ given in Proposition \ref{E-step}, with $\boldsymbol\theta^{(r)}$ being the CLEM estimate of $\boldsymbol\theta$ in the $r$th loop of the algorithm.
		\item (CL M-step) Find the CLEM estimates in the $(r+1)$th loop of the algorithm, say $\boldsymbol\theta^{(r+1)}$, by maximizing the $Q$-function defined in (\ref{Qfunction}).
		\item Check if some pre-specified stopping criterion is satisfied, for example if $\|\boldsymbol\theta^{(r+1)}-\boldsymbol\theta^{(r)}\|<\epsilon$ or $\|Q(\boldsymbol\theta^{(r+1)};\boldsymbol\theta^{(r)})-Q(\boldsymbol\theta^{(r)};\boldsymbol\theta^{(r)})\|<\epsilon$, for some small $\epsilon>0$. If the stopping criterion is satisfied, then the CLEM estimate of $\boldsymbol\theta$ is $\boldsymbol\theta^{(r+1)}$, which finishes the algorithm. Otherwise, update $\boldsymbol\theta^{(r)}$ by $\boldsymbol\theta^{(r+1)}$ and come back to the Step 2.
	\end{enumerate}
\end{algorithm}

\begin{remark}
	For a general state-space model without the inclusion of covariates, \cite{var08_2} review consistency and asymptotic normality of the estimators obtained using a pairwise likelihood approach. In their work,  certain regularity conditions, such as unbiasedness along with a stationarity assumption of the latent state process, are placed on the score function to establish consistency and asymptotic normality.  The work in \cite{ngetal2011} also discusses large sample properties of estimators from a composite likelihood method. With the latent factor being Gaussian autoregressive, under standard regularity conditions on the parameter space and also with certain moment assumptions, they establish consistency of the estimators when the covariates are stationary, bounded, and $m$-dependent processes. Following the previous works and with similar assumptions placed, large sample properties of the estimators from our CLEM method can be established. Simulated results to be presented in Section \ref{sec:simulation} show empirical evidence of consistency and asymptotic normality of the CLEM estimators as the sample size grows.  
\end{remark}

As argued by \cite{gaoson2011}, it is cumbersome to assess the standard errors of the CLEM estimates via the Godambe information. The authors suggest the usage of resampling methods to achieve this aim. To address that, we propose a parametric bootstrap as follows. Generate $B$ bootstrap replications from our fitted (via CLEM algorithm) NSVt process, say $\{\widetilde Y_{1j},\ldots$ $,\widetilde Y_{nj};\,j=1,\ldots,B\}$. Then, compute the CLEM estimates for each generated trajectory, say $\widetilde{\boldsymbol{\theta}}_j$, for $j=1,\ldots,B$. Finally, we use either the empirical distribution or normal approximation of the bootstrap estimates to assess the standard errors and to construct confidence intervals. In our real data analysis in Section \ref{sec:application}, we provide results on finding quantiles and standard errors based on this bootstrap procedure. We discuss there the usefulness of this procedure in determining the statistical significance of the regression coefficients.

\section{Monte Carlo simulation studies}\label{sec:simulation}

In this section, we investigate the finite-sample performance of the composite likelihood estimators using the EM-algorithm approach (expressed as CLEM) discussed in Section \ref{sec:clem_section}. Also, we present the results obtained by the direct optimization of the composite log-likelihood in \eqref{eq:cl_log_likelihood} (expressed as CL).  

We perform Monte Carlo simulations of the NSVt model at various parameter choices and sample sizes. We simulate from the NSVt model in Definition \ref{def:tar} under the following setups: (i) sample sizes $n \in \{ 100,300,500,1000 \}$; (ii) $\nu \in \{ 3,5,8 \}$ and $\rho \in \{ 0.5, 0.7, 0.9 \}$; and (iii) $p \in \{ 5, 10 \}$ and $\sigma^2 \in \{ 0.5 , 1 \}$. 

The $p$-dimensional covariate series ${\bf x}_t$ are generated from independent zero-mean ARMA(1,1) processes (white noises following standard normal distributions), where the ARMA coefficients were chosen at random from the Uniform distribution $U(0.3,0.7)$. After generating the covariate vector, it is kept fixed during the 500 simulation runs. The regression coefficients $\beta_1,\ldots,\beta_p$ are also chosen at random from a $U(-1,1)$ distribution and then fixed for the simulation runs. The above simulation setup is repeated 500 times and the estimates of ${\boldsymbol\beta}$, $\sigma^2$, and $\rho$ are presented from these runs. 

To assess the performance in estimating ${\bf \beta}$, we consider the relative $L_{2}$ error given by $D(\widehat{\boldsymbol\beta} ,{\boldsymbol\beta}) = \dfrac{ \| \widehat{\boldsymbol\beta} - {\boldsymbol\beta} \|^2}{ \| {\boldsymbol\beta} \|^2 }$, where $\widehat{\boldsymbol\beta}$  and ${\boldsymbol\beta}$ denote the estimated and true regression coefficient vectors respectively, and $ \|\cdot \|$ is the $L_2$ norm. In evaluating the performance of estimators of the parameters $\sigma^2$ and $\rho$, we present boxplots of the estimates along with the corresponding true values. We provide results from the composite likelihood EM-algorithm approach discussed in Section \ref{sec:clem_section} as well as those obtained by the direct optimization of the composite log-likelihood function in \eqref{eq:cl_log_likelihood}.

In the left panels in Figures \ref{fig:f1}-\ref{fig:f4}, we present the boxplots of the relative errors on the estimation of the regression coefficients $\beta_j$, $j=1,2,3,4,5$ ($p=5$ case), at the indicated sample sizes using the measure $D(\widehat{\boldsymbol\beta} ,{\boldsymbol\beta})$. These boxplots are based on 500 runs of the NSVt model. We observe that for both CLEM and CL methods, the errors decrease as the sample size grows and we also notice a comparable performance between the two methods. In the right panels in Figures \ref{fig:f1}-\ref{fig:f4}, we provide the boxplots of the estimates of $\sigma^2$ and $\rho$ along with the true values plotted using dashed lines. Here again for both CLEM and CL methods, as the sample size grows, the boxplots get narrower and the median gets closer to the true value. Regarding the estimation of the parameter $\rho$, which controls the persistence of volatility, we observe  a better performance of the CLEM approach over the CL method. It must be mentioned that in estimating $\sigma^2$ and $\rho$, we witnessed a similar performance when the number of regression coefficients $p$ is increased to 10 and also when $\sigma^2$ was set to 1. We have provided similarly generated boxplots for other parameter choices in the Supplementary Material.

\begin{figure*}
	\begin{subfigure}{.5\textwidth}
		\includegraphics[scale=0.38]{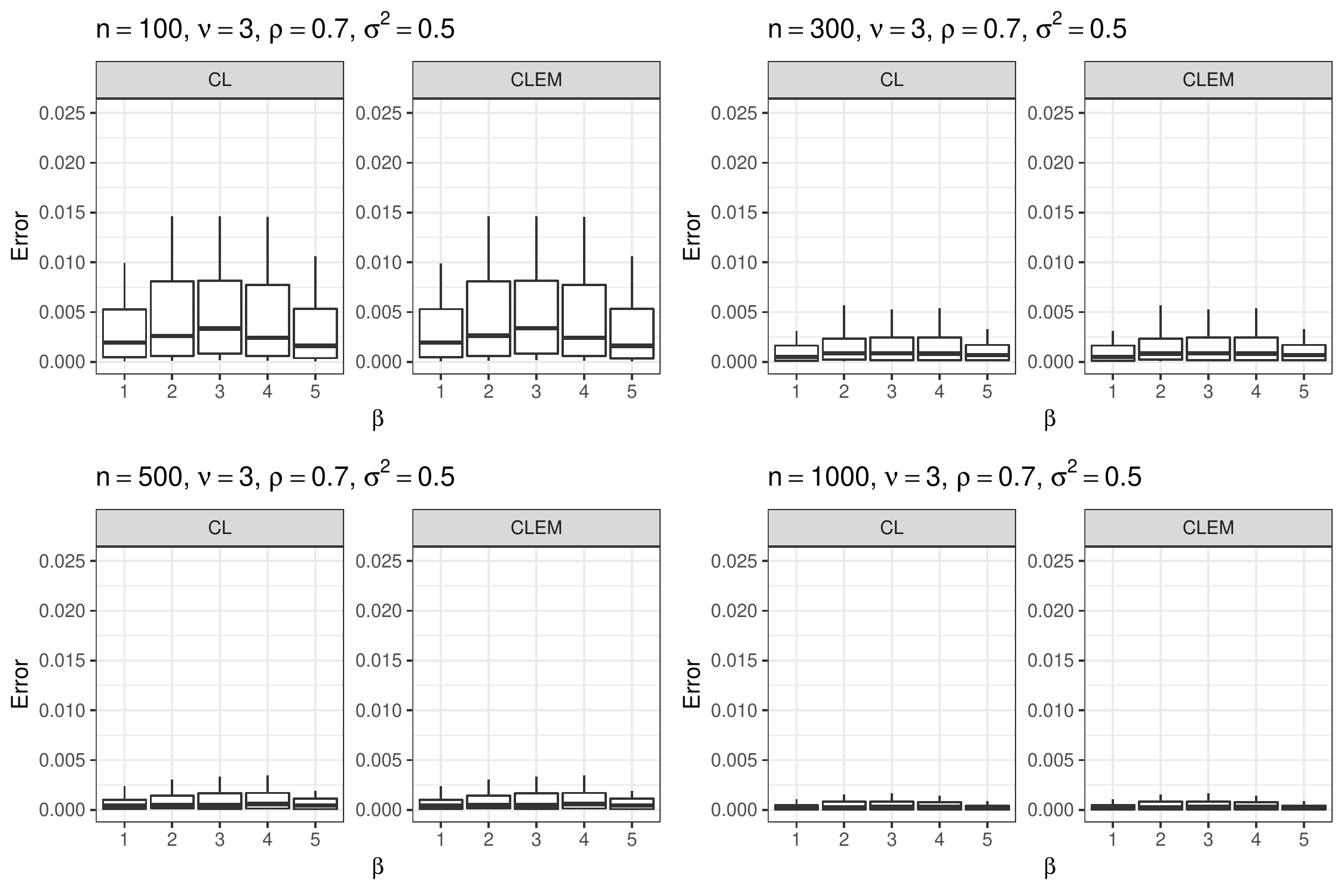}
	\end{subfigure}
	\begin{subfigure}{.5\textwidth}
		\includegraphics[scale=0.38]{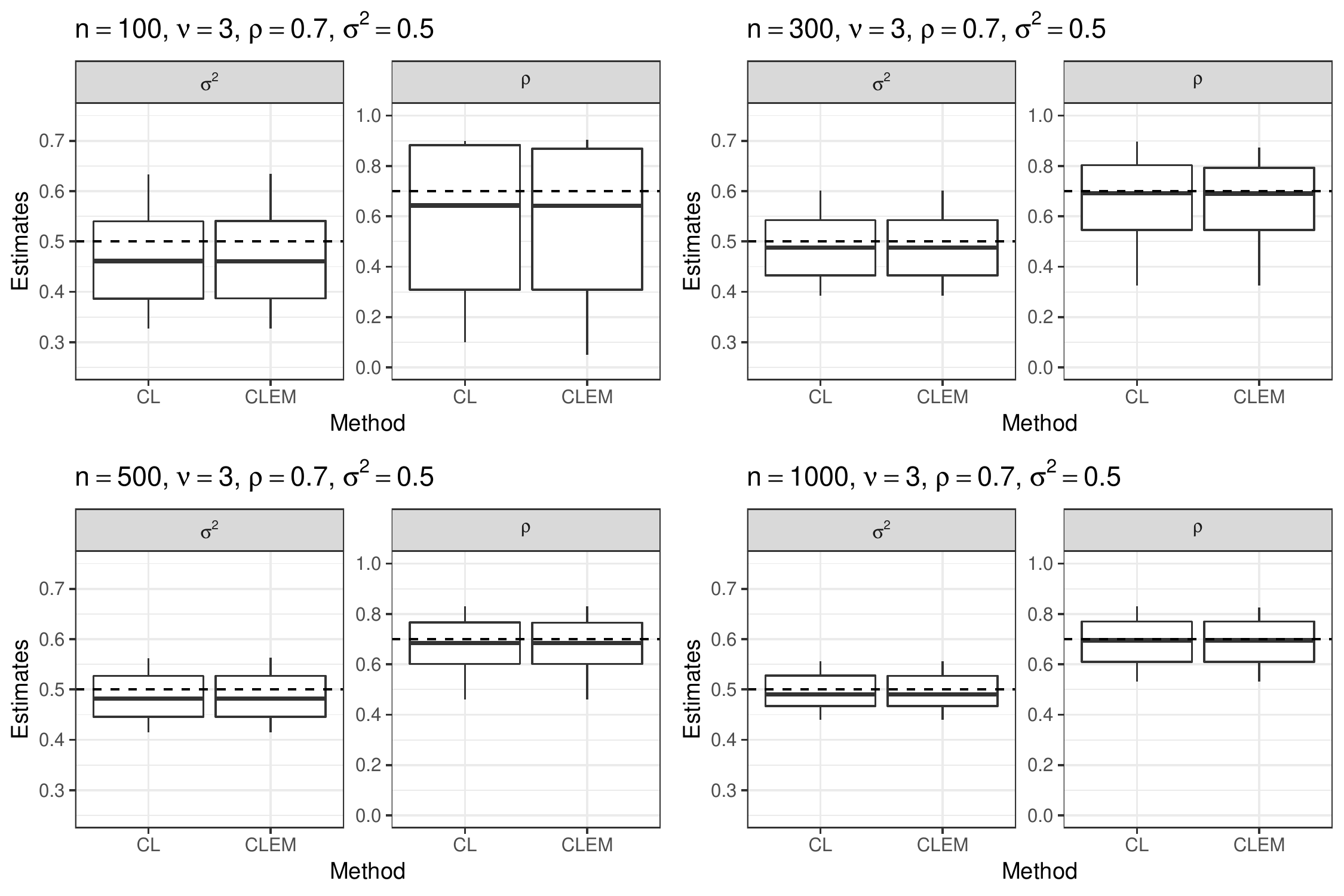}
	\end{subfigure}
	\caption{Left panel: Error boxplots for estimating the regression coefficients $\beta_j$, $j=1,2,3,4,5$, at the indicated sample sizes using the measure $D(\widehat{\boldsymbol\beta} ,{\boldsymbol\beta})$. Right panel: Boxplots of estimates of $\sigma^2$ and $\rho$ at the indicated sample sizes. True values are given via dashed lines. }  \label{fig:f1}
\end{figure*}

\begin{figure*}
	\begin{subfigure}{.5\textwidth}
		\includegraphics[scale=0.38]{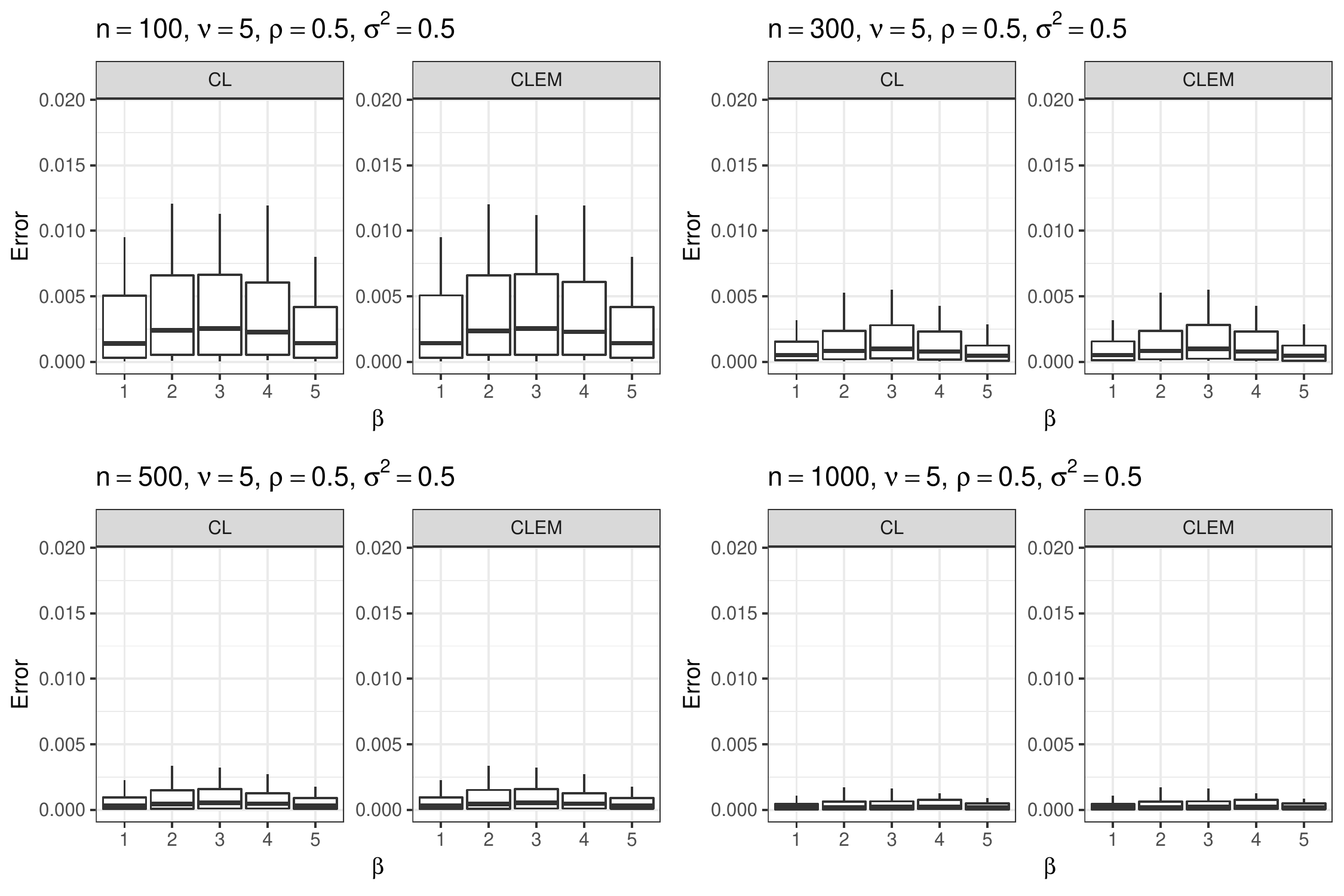}
	\end{subfigure}
	\begin{subfigure}{.5\textwidth}
		\includegraphics[scale=0.38]{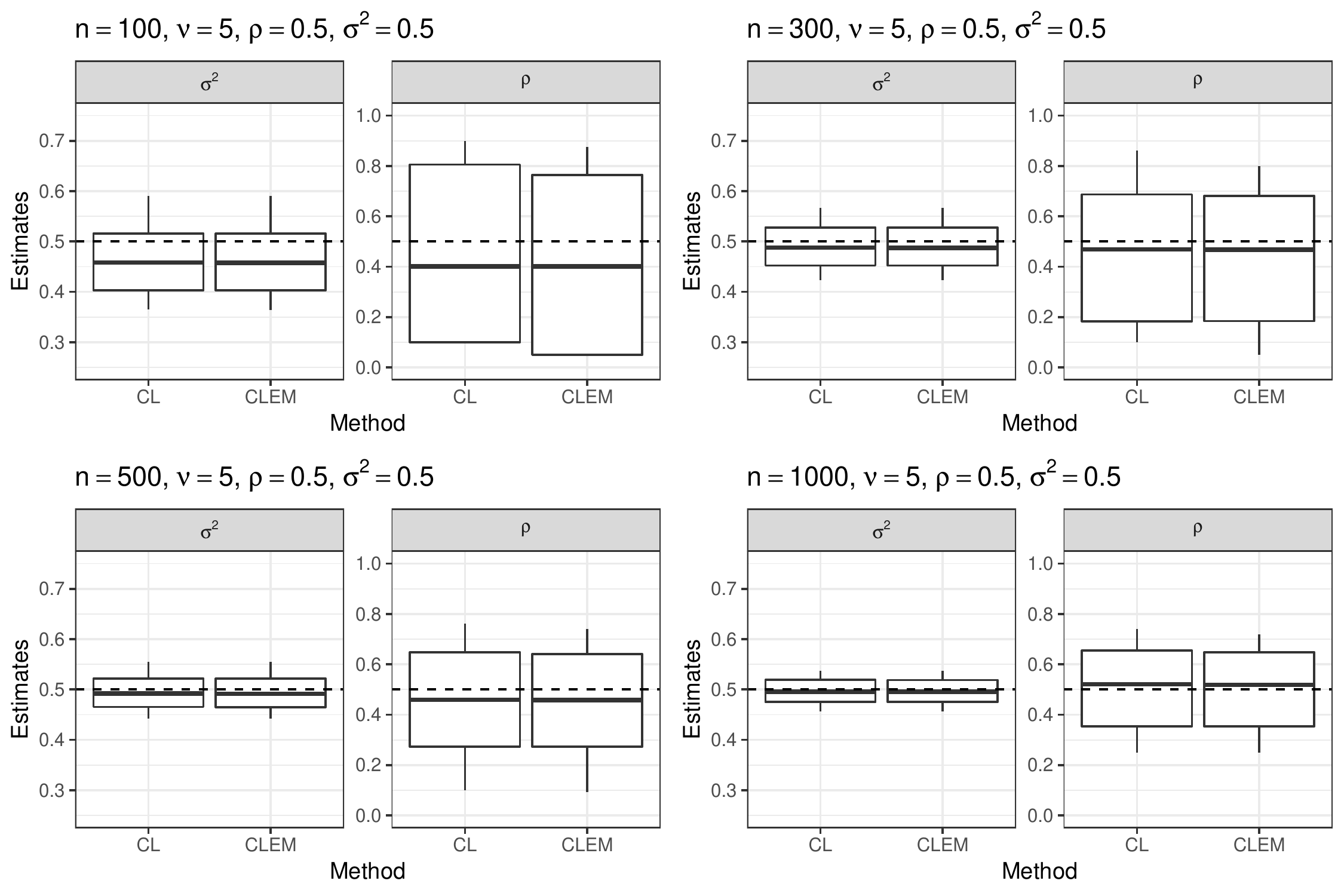}
	\end{subfigure}
	\caption{ Left panel: Error boxplots for estimating the regression coefficients $\beta_j$, $j=1,2,3,4,5$,  at the indicated sample sizes using the measure $D(\widehat{\boldsymbol\beta} ,{\boldsymbol\beta})$. Right panel: Boxplots of estimates of $\sigma^2$ and $\rho$ at the indicated sample sizes. True values are given via dashed lines.  } \label{fig:f2}
\end{figure*}

\begin{figure*}
\begin{subfigure}{.5\textwidth}
\includegraphics[scale=0.38]{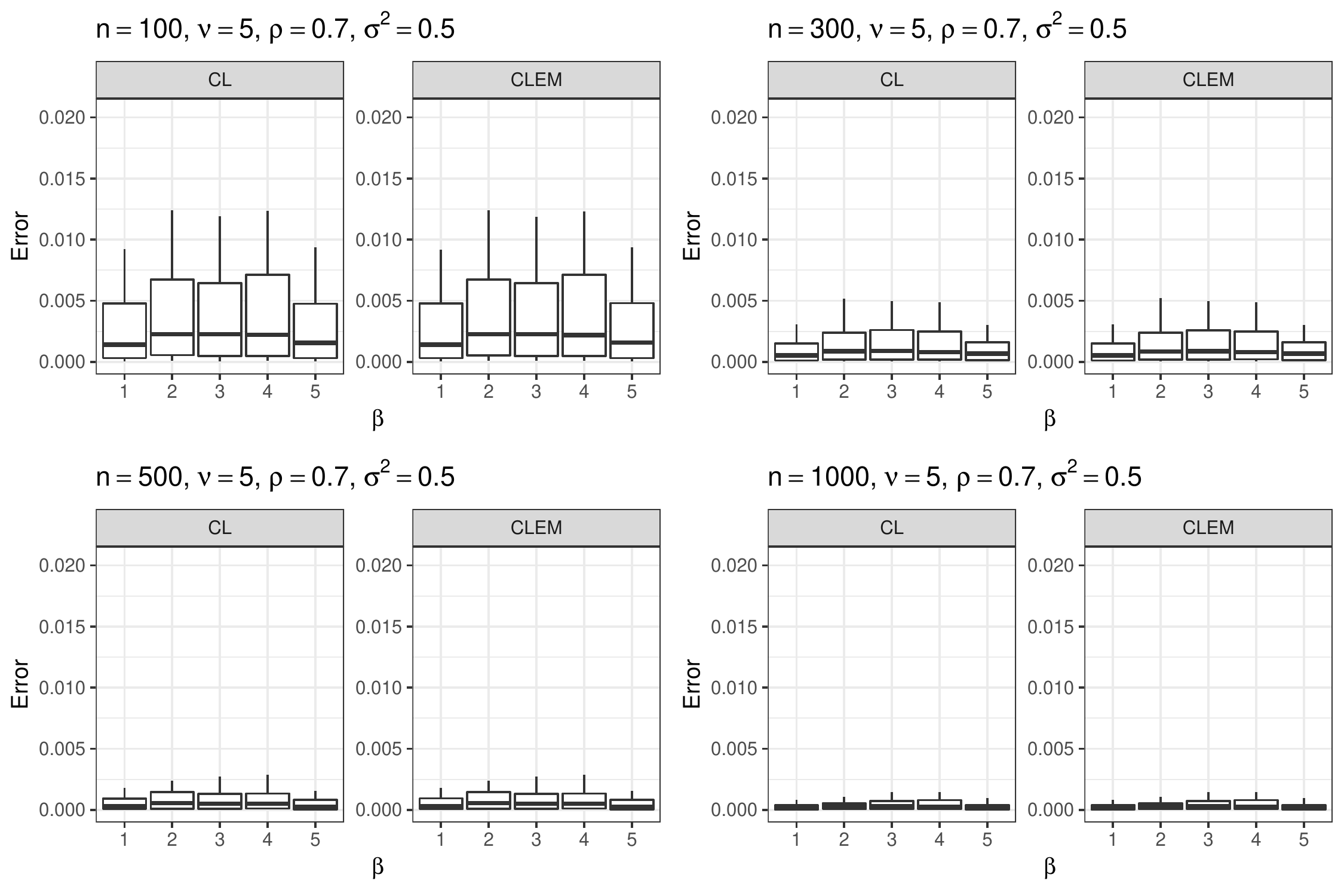}
\end{subfigure}
\begin{subfigure}{.5\textwidth}
\includegraphics[scale=0.38]{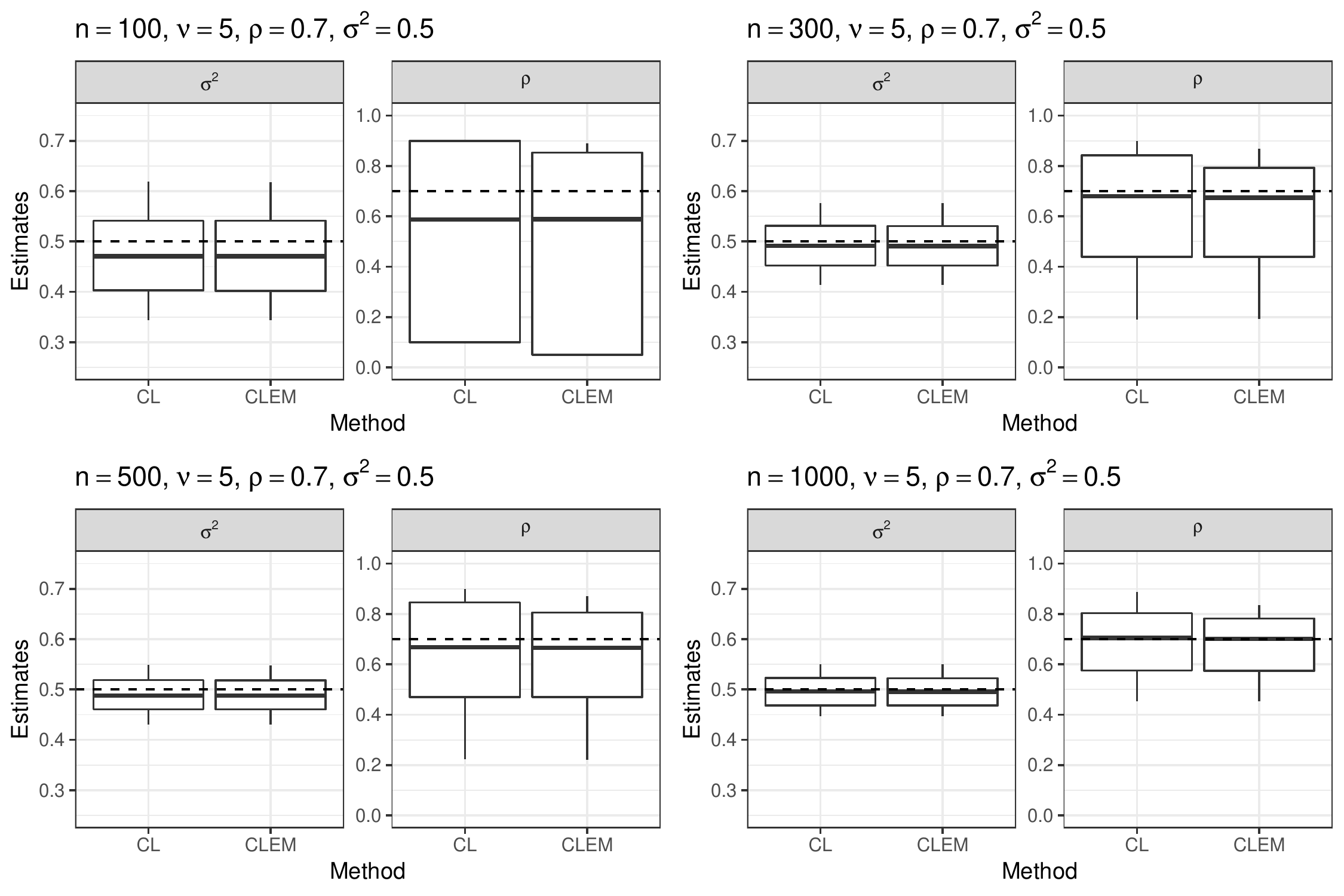}
\end{subfigure}
\caption{ Left panel: Error boxplots for estimating the regression coefficients $\beta_j$, $j=1,2,3,4,5$,  at the indicated sample sizes using the measure $D(\widehat{\boldsymbol\beta} ,{\boldsymbol\beta})$. Right panel: Boxplots of estimates of $\sigma^2$ and $\rho$ at the indicated sample sizes. True values are given via dashed lines.  } \label{fig:f3}
\end{figure*}

\begin{figure*}
\begin{subfigure}{.5\textwidth}
\includegraphics[scale=0.38]{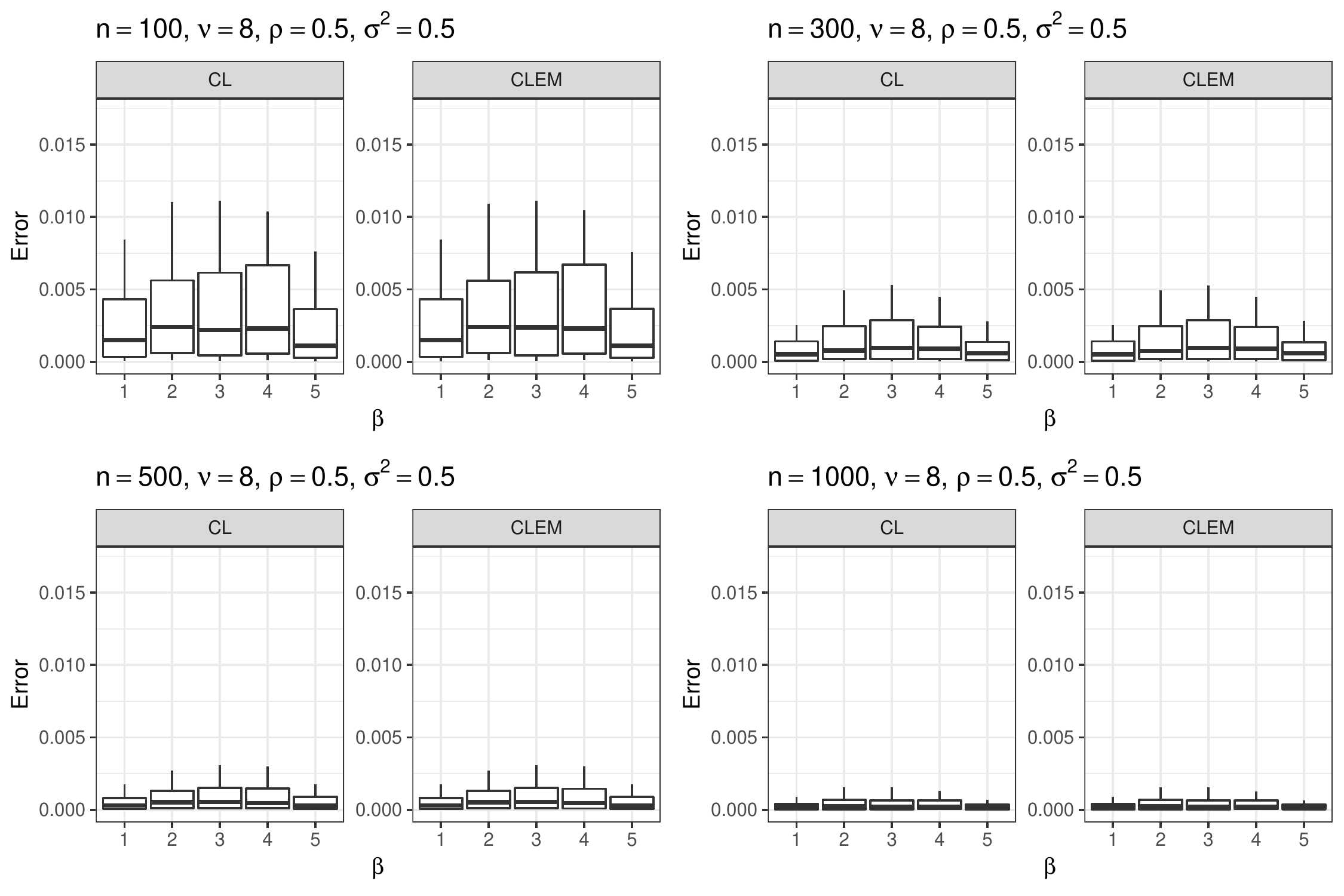}
\end{subfigure}
\begin{subfigure}{.5\textwidth}
\includegraphics[scale=0.38]{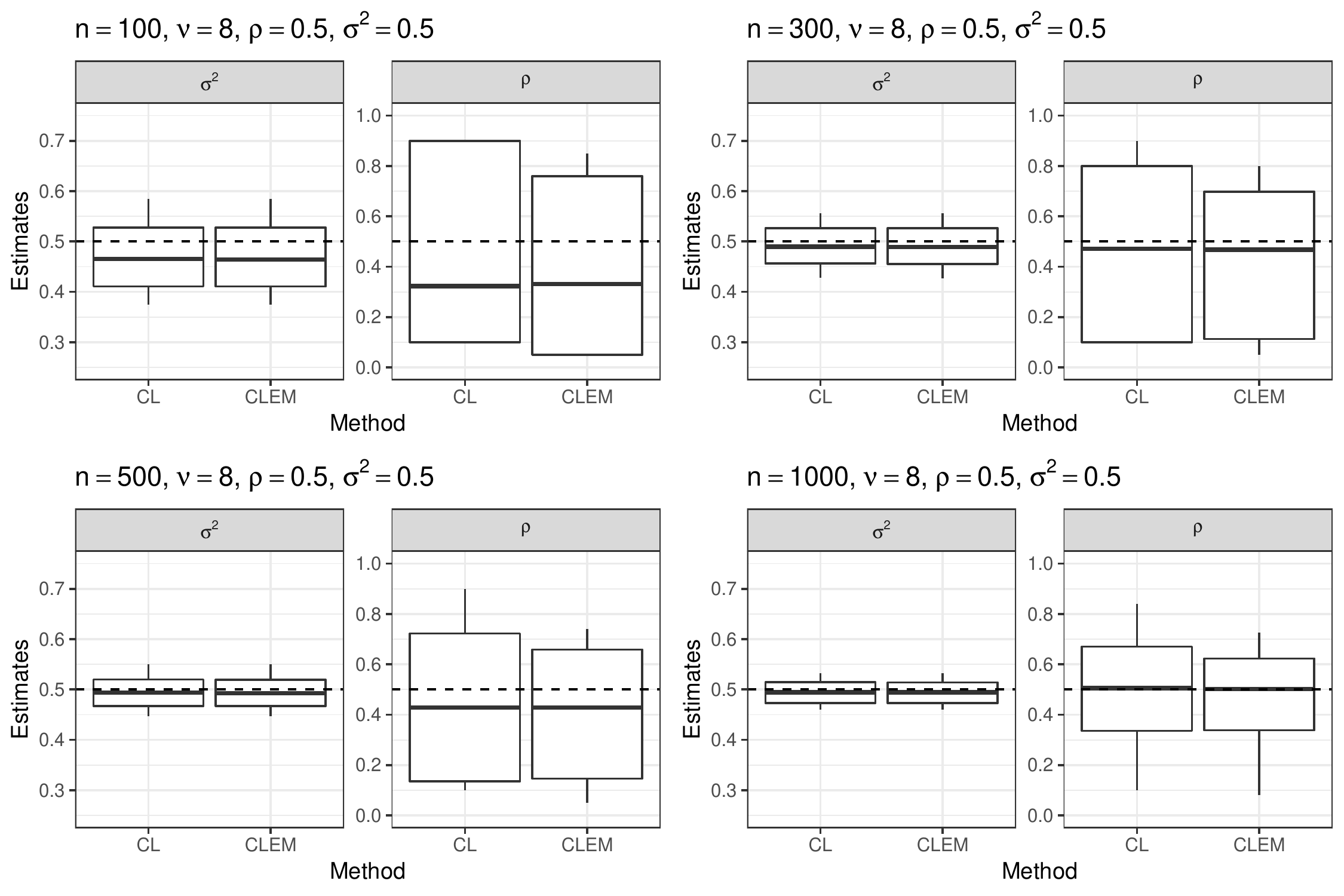}
\end{subfigure}
\caption{ Left panel: Error boxplots for estimating the regression coefficients $\beta_j$, $j=1,2,3,4,5$,  at the indicated sample sizes using the measure $D(\widehat{\boldsymbol\beta} ,{\boldsymbol\beta})$. Right panel: Boxplots of estimates of $\sigma^2$ and $\rho$ at the indicated sample sizes. True values are given via dashed lines. } \label{fig:f4}
\end{figure*}

\section{Application to exchange-traded funds based portfolio management} \label{sec:application}

We look at an application of our method in finding optimal portfolios using the US sector ETFs (Exchange-Traded Funds) data. An ETF constitutes a combination or basket of securities, such as stocks, that is traded on an exchange. One example is the SPDR S\&P 500 ETF\footnote{\url{https://finance.yahoo.com/quote/SPY/}} that tracks the S\&P 500 index. A sector ETF is a combination of stocks restricted to a particular sector of the economy; for instance, the Health Care Select Sector SPDR Fund\footnote{\url{https://finance.yahoo.com/quote/xlv?ltr=1}} tracks stocks in the health care sector. \cite{avellaneda10} discuss statistical arbitrage strategies using sector ETFs as risk factors. In the previous work, the regression of stock price returns $R_{i,t} $, $i=1,2,\hdots,M$, of $M$ companies on $p$ active US sector ETF returns $S_{i,t}$, $i=1,2,\hdots, p$, is carried out. Sector ETFs in this regression setup are viewed as proxies of risk factors for the individual companies. Such a regression assists in managing portfolios and more importantly, getting to a market-neutral portfolio wherein the returns on the portfolio are more protected from shocks in the sectors of the economy. More precisely, the regression setup is
\vspace{-0.3cm}
\begin{equation}\label{eq:trading_reg}
R_{i,t} = \sum_{j=1}^p \beta_{i,j} S_{j,t} + \epsilon_{i,t}
\vspace{-0.3cm}
\end{equation}   
for $t=1,2\hdots,n$, $i=1,2,\hdots,M$, where the $\{\epsilon_{i,t}\}$'s are white noises. Knowledge of the $\beta_{i,j}$ helps reaching a market-neutral portfolio that satisfies $\sum_{i=1}^M \beta_{i,j} A_i=0$, $\forall j=1,2,\hdots,p$, where $A_i$ is the amount invested in $i^{th}$ company. Further, the $\beta_{i,j}$ coefficients enables a active portfolio management strategy wherein portfolio related decisions can be made more frequently. To obtain the coefficients in \eqref{eq:trading_reg},  \cite{avellaneda10} carry out a multiple regression analysis using sparsity/ridge type constraints on the coefficients. Similar to the setup in \eqref{eq:trading_reg}, \cite{lalloux2000} and \cite{pleroux2002} consider the regression of stock price returns on approximately $15$ factors to study movements in returns explained by a few important factors in the market. \cite{krauss17} and \cite{stubinger18} are related works that discuss pairs trading strategies using high-frequency returns data, where the exposure of the returns (response) from the chosen pairs against common risk factors (regressors) from the market is studied. The statistical significance of these risk factors in this regression is analyzed through the typical linear regression inference approach.

With our NSVt time series model given in Definition \ref{def:tar}, we take $Y_{i,t} = \overline{R}_{i,t}$, where $\overline{R}_{i,t}$ is the standardized stock price return of company $i$ at time $t$ and ${\bf x}_t = (x_{1,t},x_{2,t},\hdots,x_{p,t})^{\top}$, where $x_{j,t}=S_{j,t}$ is the standardized returns of the $j^{th}$ sector ETF. Standardization is done by subtracting the mean and dividing by the standard deviation over local time windows (see the description for training periods below). The $p=11$ sector ETFs considered are given in Table \ref{tab:etf_boot_coefficients}. We consider optimizing market-neutral portfolios involving $M=2$ stocks:  Chevron(CVX) and Exxon (XOM). Starting from June 3, 2019, we take the next 15 trading days as starting days. Each training period consists of 4 days from, and including, the corresponding starting day. We thus have a total of 15 training periods, each having $n=720$ observations. It must be noted that we only consider data during the hours 9:30 am to 3:30 pm in every trading day and therefore we have 2-minute returns (frequency of the data). The majority of the SV models literature involving applications of financial returns use lower frequency data, such as daily or monthly. We use high-frequency intra-day returns data and this assists in analyzing relationships over smaller time periods. This also enables making portfolio decisions more frequently.

In each training period, we fit the following time series regression models: 
\vspace{-0.3cm}
\begin{equation} \label{eq:etf_two_regressions}
(i) \;\; Y_{1,t} \sim (x_{1,t},x_{2,t},\hdots,x_{11,t})^\top, \;\; \;
(ii) \;\; Y_{2,t} \sim (x_{1,t},x_{2,t},\hdots,x_{11,t})^\top,
\vspace{-0.5cm}
\end{equation}
where $t=1,2,\hdots,n$ and $Y_{1,t}$ and $Y_{2,t}$ are the standardized returns of Chevron (CVX) and Exxon (XOM) respectively. We are assuming that $\{Y_{1,t}\}$ and $\{Y_{2,t}\}$ are conditionally independent NSVt processes given the covariate vector $(x_{1,t},x_{2,t},\hdots,x_{11,t})^\top$. The above regressions are fitted using high-frequency returns data\footnote{Data source: \url{https://polygon.io/}}. As an example, Figure \ref{fig:response_plot_example} shows plots of the standardized returns of the two response processes, $Y_{1,t}$ (Chevron CVX) and $Y_{2,t}$ (Exxon XOM), for the training period June 14-19 (4 trading days), 2019. 

\begin{figure*}
	\begin{center}
		\includegraphics[scale=0.44]{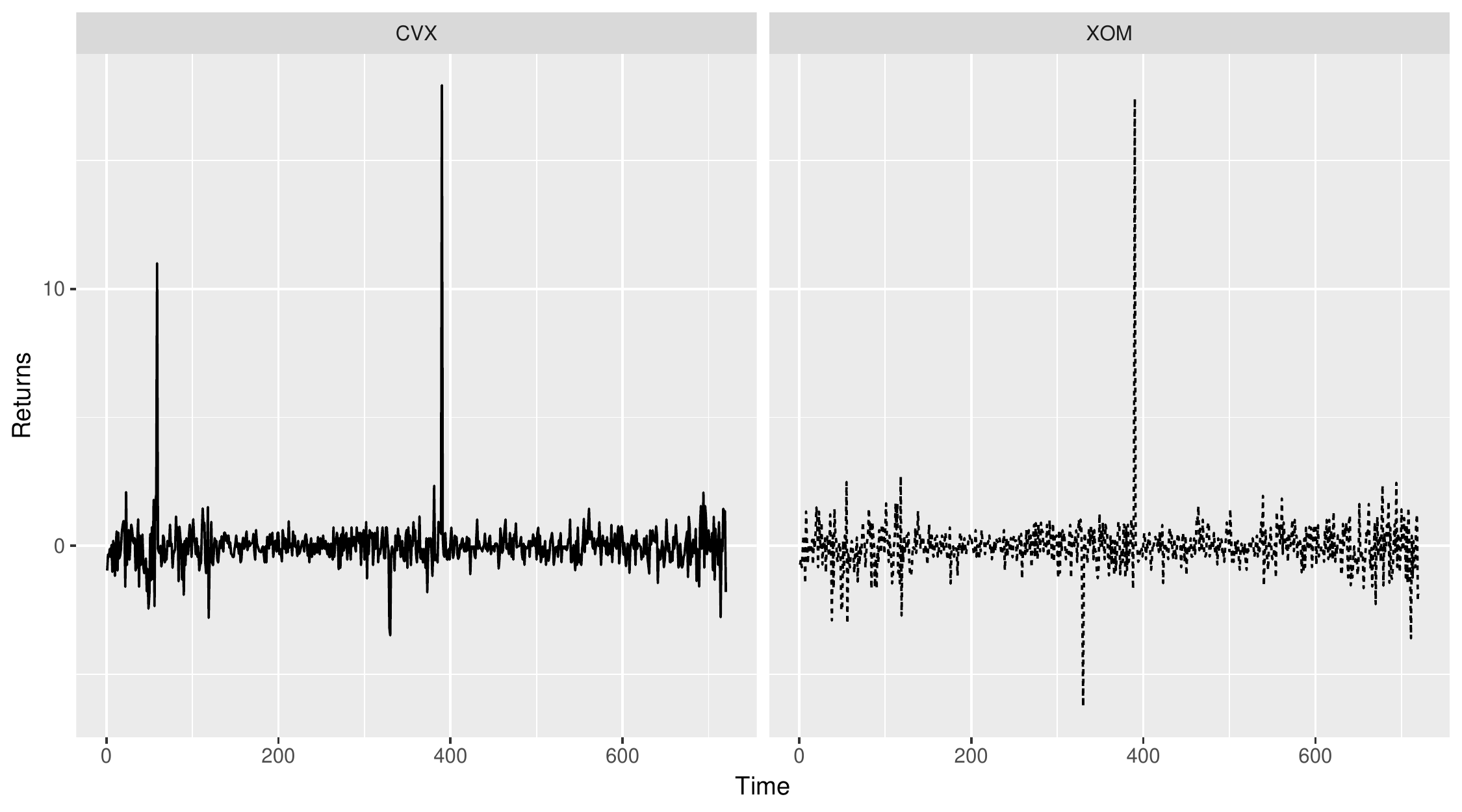}
		\vspace{-0.3cm}
		\caption{Plots of the standardized returns of Chevron CVX ($Y_{1,t}$) and Exxon XOM ($Y_{2,t}$) for the training period June 14-19, 2019.} \label{fig:response_plot_example}
	\end{center}
\end{figure*}

By fitting our NSVt model to the CVX and XOM time series, we obtain the estimates of regression coefficients and parameters $(\sigma^2, \rho)$ using the CLEM algorithm from Section \ref{sec:clem_section}. The estimates for the $\nu$ parameter were obtained via a). Direct optimization of the composite likelihood in \eqref{eq:cl_log_likelihood} and b). Initial guess obtained by maximizing the Student-$t$ log-likelihood function in \eqref{eq:student_t_initial_guess_lkl}. For comparison purposes, we also consider the approaches by \cite{avellaneda10}, where the regression coefficients are obtained from ridge and LASSO models (\citealp{hastie2016}) by assuming \eqref{eq:etf_two_regressions}. We aim to contrast the portfolio returns of our method with these competitors.

First, in Table \ref{tab:etf_boot_coefficients},  we list the 11 US sector ETFs that are considered for the regressions in \eqref{eq:etf_two_regressions} along with the average regression coefficients over the 15 training periods in June 2019. More precisely, the second column reports the average regression coefficients $\overline{\beta}_{1,j}$ estimated by our CLEM method when $Y_{1,t}$ (Chevron CVX) is the response. Each training period yields a set of $p=11$ regression coefficients and we present the 11 averages obtained over the 15 training periods. Similarly, the sixth column has the corresponding coefficients when $Y_{2,t}$ (Exxon XOM) is the response. We notice the importance of the coefficient corresponding to sector ETF XLE (second line in Table \ref{tab:etf_boot_coefficients}), which is the energy sector of the market\footnote{\url{https://finance.yahoo.com/quote/XLE/}}. We also notice that the coefficients behave similarly across CVX and XOM and this is expected as the two major stocks from the energy sector are expected to be impacted in the same way by movements in the sectors. The advantage of this regression of individual stock price returns on sector ETFs is that it tells us which sector's movements impact any given company's returns. 

\begin{table} 
\footnotesize
	\begin{center}
	\begin{tabular}{| c |c c c c|c c c c | }
		\hline
		& \multicolumn{4}{c}{CVX} & \multicolumn{4}{c|}{XOM} \\
		\cline{2-9}
		ETF & $\overline{\beta}_{1,j}$  & Lower & Upper & SE & $\overline{\beta}_{2,j}$ & Lower & Upper & SE \\
		\hline
		XLF Financial  & 0.024  & $-$0.013 & 0.060 & 0.019  & 0.014  & $-$0.035 &  0.065 & 0.026  \\
		XLE Energy & 0.765 & 0.719 & 0.805 & 0.022 & 0.780 & 0.692 & 0.797 &  0.027  \\
		OIH Oil Services & $-$0.076 & $-$0.106 & $-$0.034 & 0.018 & $-$0.086 & $-$0.121 & $-$0.035 & 0.022  \\ 
		XLK Tech & 0.040 & $-$0.002 & 0.090 & 0.023 & 0.059 & $-$0.004 &  0.119  &  0.031  \\
		XLP Consumer Staples & 0.059 & 0.015  & 0.106 & 0.023 & 0.069 & 0.009  & 0.121 & 0.029  \\
		XLV Health Care & 0.026 & $-$0.013 &  0.068 & 0.021 & 0.038 & $-$0.019 &  0.089 &  0.027    \\
		XLU Utilities & 0.052 & 0.011 & 0.092 & 0.021 & 0.021 & $-$0.028 & 0.083 &  0.028  \\
		GDX Gold Miners & $-$0.023 & $-$0.053 & 0.013 & 0.017 & $-$0.020 & $-$0.056 & 0.026  &  0.021   \\
		XLI Industrials & $-$0.049 & $-$0.097 & $-$0.003 & 0.024 & $-$0.056 & $-$0.114 &  0.013 & 0.032   \\
		IYE Energy Ishare & 0.006 & $-$0.014 & 0.024 & 0.010 & 0.005 & $-$0.021 &  0.032  &  0.013  \\
		XME Metals & 0.000 & $-$0.019 & 0.019 & 0.010 & 0.014 & $-$0.010  & 0.043  & 0.013  \\
		\hline
	\end{tabular} 
	\end{center}
	\vspace{-0.3cm}
	\caption{$p=11$ US Sector ETFs, the average regression coefficient estimates $\overline{\beta}_{1,j}$ and $\overline{\beta}_{2,j}$  obtained over the 15 training periods in June 2019, the 95\% Lower and Upper confidence limits of the regression coefficient estimates. Limits and standard errors (SE) obtained using parametric bootstrap. Results provided for the two companies Chevron (CVX) and Exxon (XOM) from the periods in June 2019.} \label{tab:etf_boot_coefficients}
\end{table}

In Table \ref{tab:etf_boot_coefficients}, we also present the 95\% confidence interval limits of the estimated regression coefficients of the two time series regression models in \eqref{eq:etf_two_regressions}. Here we employ a parametric bootstrap procedure, where replications from the fitted NSVt model are generated. Histograms of the bootstrap estimates of these $p=11$ regression coefficients, provided in the Supplementary Material, indicate a Gaussian-like behavior and we then compute the confidence interval limits based on the normal-bootstrap approach. For the two companies Chevron (CVX) and Exxon (XOM), the confidence limits and standard errors provided in this table are averages of the confidence limits and standard errors obtained over the 15 training periods from June 2019. We again notice the importance of the energy sector ETF XLE. The sector ETF OIH (Oil Services) is seen to have a negative relationship with the returns of the two companies. We also observe from the confidence limits that most other sector ETFs are not significant in explaining movements in the returns of the two companies. For the CL method which directly optimizes the composite likelihood in \eqref{eq:cl_log_likelihood}, an analogous table is provided in the Supplementary Material.

Next, in the left plot in Figure \ref{fig:etf_nu_returns}, we present estimates of the $\nu$ parameter from Definition \ref{def:tar} over the 15 training periods in June 2019.  We notice a `co-movement' type behavior in this $\nu$ parameter for these two stocks across this period. Changes in the $\nu$ parameter across the days in June 2019 can also indicate periods of high and low volatility. As an example, for Chevron (CVX), higher values of $\nu$ seen in the early part of that month can indicate a time of low volatility whereas a decreasing $\nu$ in the latter part of that month indicates a period of higher volatility. To add more evidence to this observation, in the Supplementary Material, we provide plots of the estimated conditional variance $\mbox{Var}(Y_{t+1}|Y_t=y_t)$ obtained from Proposition \ref{p:condvar}. There, we observed that the conditional variances were notably higher for the training periods in the latter part of June 2019. Note that the $\nu$ parameter is obtained by maximizing the Student-$t$ log-likelihood function in \eqref{eq:student_t_initial_guess_lkl}. A similar plot of estimates of $\nu$ obtained by the direct optimization of the composite likelihood in \eqref{eq:cl_log_likelihood} is provided in the Supplementary Material.  Plots of the estimates of $\sigma^2$ and $\rho$ using the CLEM-algorithm during the 15 training periods in June 2019 using the CLEM and CL methods are also given in the Supplementary Material.

In Table \ref{tab:etf_boot_oparam}, we provide the confidence limits and standard errors of the remaining parameters and their estimates respectively under our stochastic volatility model. The confidence limits of the $\nu$ parameter along with its plot in Figure \ref{fig:etf_nu_returns} indicate non-Gaussian characteristics in the response process which is in agreement with several works in the literature witnessing a heavy tail behavior in stock price returns data. For the CL method, an analogous table is provided in the Supplementary Material. To assess the model adequacy, we use the probability integral transform where a histogram of the distribution function transform of the estimated values of the response is presented. These histograms, obtained for the various training periods under the NSVt model, are provided in the Supplementary Material. These plots show that the proposed Student-$t$ process is well-fitted to the time series considered in this application.

\begin{table} 
\footnotesize
	\begin{center}
	\begin{tabular}{| c | c c c c|c c c c| }
		\hline
		& \multicolumn{4}{c}{CVX} & \multicolumn{4}{c|}{XOM} \\
		\cline{2-9}
		& Lower & Upper & SE & Mean & Lower & Upper & SE & Mean  \\
		\cline{2-9}
		$\sigma^2$  & 0.039 &  0.064 & 0.007 & 0.049 &  0.068 & 0.116 & 0.012 & 0.086  \\
		$\nu$  & 4.036 &  4.180 & 0.061 & 4.089 & 3.428 &  3.472 & 0.010 & 3.432  \\
		$\rho$  & 0.490 & 0.869 & 0.098 & 0.757 &  0.562 & 0.892  & 0.089 & 0.793  \\ 
		\hline
	\end{tabular}  
	\end{center}
	\vspace{-0.3cm}
	\caption{Average lower and upper 95\% confidence interval limits for the parameters $\sigma^2$, $\nu$, and $\rho$. Limits, standard errors (SE) and Mean obtained using a parametric bootstrap approach. Results provided for the two companies Chevron (CVX) and Exxon (XOM) from the periods in June 2019.} \label{tab:etf_boot_oparam} 
\end{table}

With the regression coefficients $\beta_{i,j}$ for $i=1,2$ and $j=1,2,\hdots,p=11$ from \eqref{eq:etf_two_regressions}, we obtain portfolio weights using the optimization problem in \eqref{eq:portfolio_optimization}. Here we take the daily returns $R_{1,t}$ (CVX) and $R_{2,t}$ (XOM) from the 4 days in training period considered for \eqref{eq:etf_two_regressions} plus an additional 2 trading days. After each training period, the following 2 trading days is treated as our evaluation period. We thus have 15 evaluation periods to assess the performance of our method. We consider the optimization procedure given by
\vspace{-0.3cm}
\begin{equation} \label{eq:portfolio_optimization}
\underset{w_{1,j},w_{2,j}}{\text{maximize}}
  \sum_{t \in P_{tr.}} w_{1,j} R_{1,t} + w_{2,j} R_{2,t}, \; \;
\text{subject to} \;\;
 w_{1,j} + w_{2,j} = 1, \;\; \mbox{and} \; \;
 w_{1,j}\beta_{1,j} + w_{2,j}\beta_{2,j} < \delta,
 \vspace{-0.3cm}
\end{equation}
where $w_{1,j}$ and $w_{2,j}$ are the respective investment weights corresponding to the two stocks, CVX and XOM respectively, for the $j^{th}$ portfolio. $R_{1,t}$ and $R_{2,t}$ are the daily returns of the two stocks CVX and XOM,  respectively, at day $t$. Here, $P_{tr.}$ denotes the training days which includes the 4 trading days used in the regressions in \eqref{eq:etf_two_regressions} plus an additional two trading days. The second constraint, that involves regression coefficients, aims at reaching a market-neutral portfolio and  includes a $\delta$ that is chosen to be very small; here we set $\delta = 10^{-4}$. \cite{pleroux2002} consider a similar portfolio optimization problem but with the objective of minimizing portfolio risk  subject to the portfolio returning a desired amount.

\begin{figure*}
	\centering
	\begin{subfigure}{.5\textwidth}
		\includegraphics[scale=0.32]{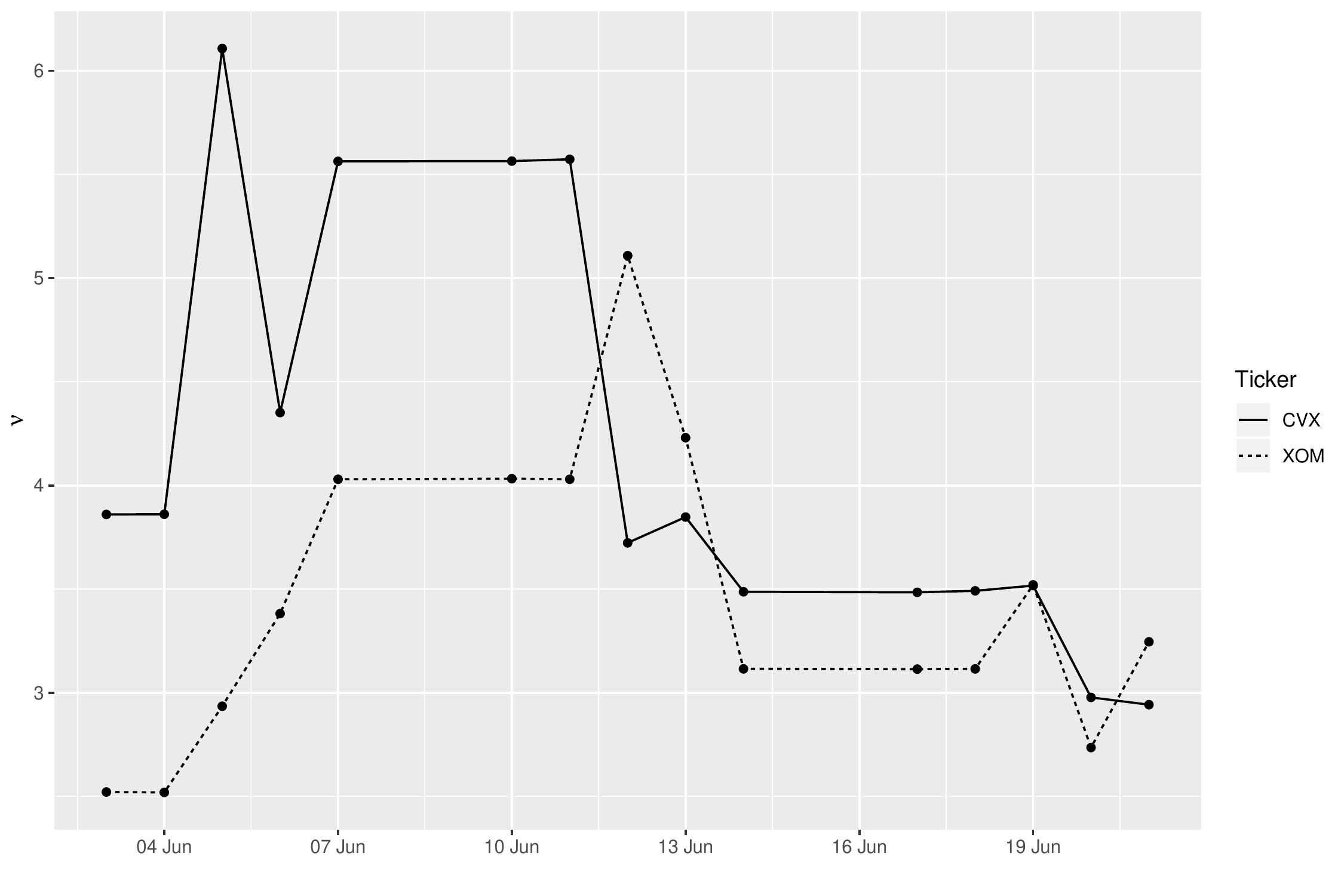}
	\end{subfigure}
	\begin{subfigure}{.45\textwidth}
		\includegraphics[scale=0.32]{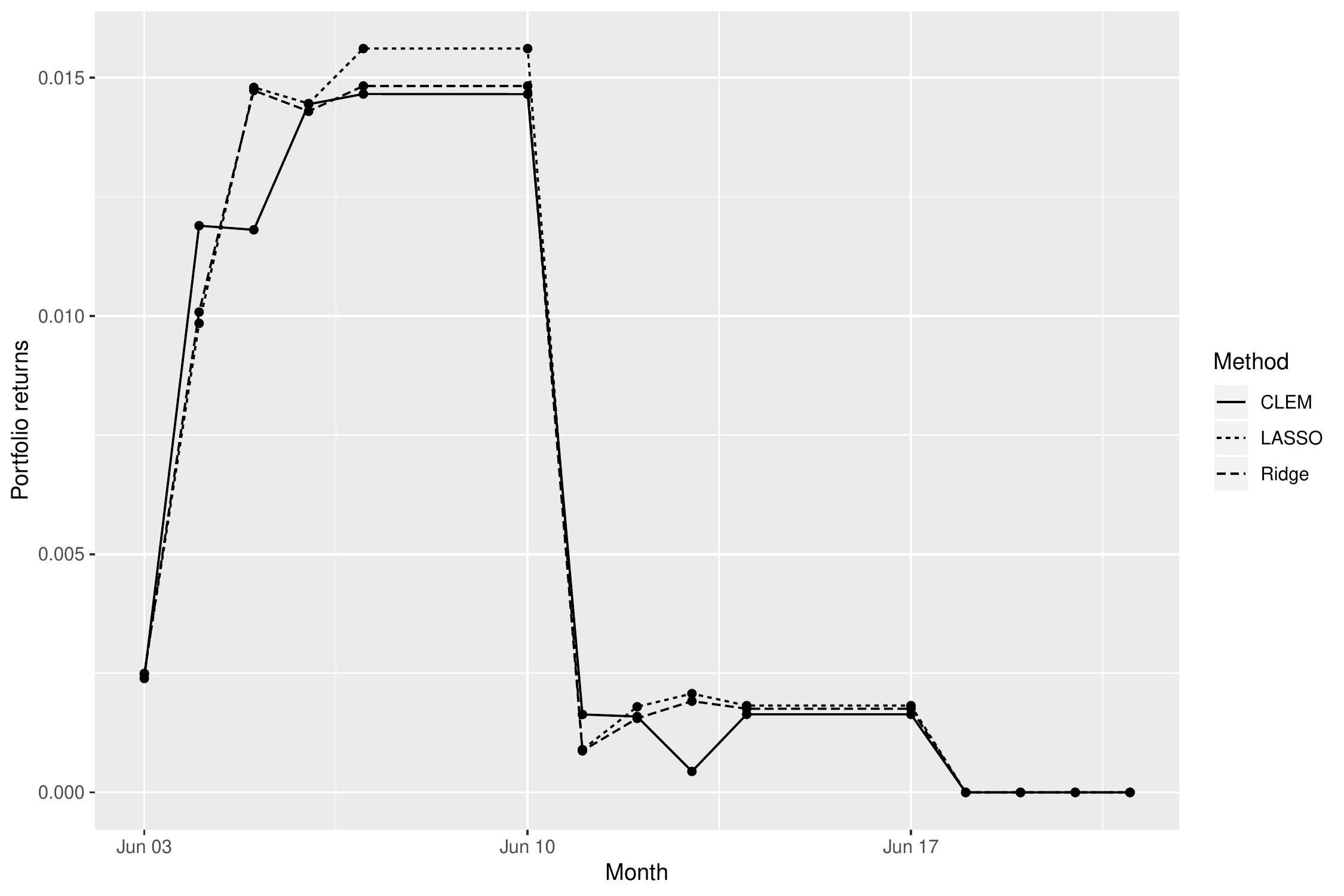}
	\end{subfigure}
	\caption{Left: Estimated $\nu$ in the 15 training periods of June 2019 for the two stocks Chevron (CVX) and Exxon (XOM). Right: Portfolio returns of the 3 competing methods during the 15 evaluation periods of June 2019.} \label{fig:etf_nu_returns}
\end{figure*}

Finally, in the right plot of Figure \ref{fig:etf_nu_returns}, we present the portfolio returns for the 3 competing methods during the 15 evaluation periods in June 2019. We notice here a comparable performance across the methods. Existing methods in the financial literature carry out either conventional linear regression or penalized regression methods to address the modeling in \eqref{eq:etf_two_regressions}. As discussed in \cite{yuan06}, sparse regression methods like LASSO are designed for selecting individual variables in a regression setup and are much less suitable in a factor regression setup with dependent factors. It is also challenging to perform statistical inference on the regression coefficients with the LASSO approach and, with the regressors and the response being time series, as in our situation, penalized regression techniques become largely inappropriate. Unlike these competitors, it must be emphasized that our method allows us to perform statistical inference on the regression coefficients. This is critical in understanding which sector ETFs are significant in impacting the individual companies' returns. Therefore, our NSVt model is more suitable for the application considered here since such an inference is fundamental to portfolio management.

\vspace{-0.5cm}

\section*{Acknowledgments and Code availability} W. Barreto-Souza would like to acknowledge the financial support from the KAUST Research Fund. The computer code for implementing our method in \textsf{R} is made available on \textsf{GitHub} at \url{https://github.com/rsundarar/student_t_SV}. Some examples from the simulations and real data analysis are included.  

\vspace{-0.7cm}

\section{Conclusions and outlook}\label{sec:conclusion}

A novel Student-t stochastic volatility model was proposed. The pairwise density function was derived in an explicit form, thereby enabling us to propose a composite likelihood inference. This approach has a low computational cost, therefore has an appealing feature over some existing stochastic volatility models where complex and time-consuming algorithms are necessary to perform inference. A composite likelihood Expectation-Maximization algorithm was developed for dealing with possible numerical issues due to the direct maximization of the composite log-likelihood function. Standard errors and confidence limits obtained via a parametric bootstrap method were also discussed. Monte Carlo simulations are performed and it was seen that both composite likelihood approaches produced satisfactory results, but with the CLEM algorithm showing a slight advantage in estimating parameters related to the latent gamma process. 

An application of our Student-$t$ process to modeling the relationship between stock price returns and sector ETF returns was explored and was further utilized to find optimal portfolios. High-frequency returns data was used which helps make portfolio decisions more frequently. The performances of our method and some competitors were assessed through portfolio returns. Our method is more suitable for the factor regression considered in this application wherein both the response and regressors are time series. Our approach allows us to perform statistical inference on the regression coefficients, as opposed to the methods in the financial literature wherein conventional regression methods or sparse regression techniques are used.

An alternative EM-algorithm for including the estimation of $\nu$ is one problem that deserves future investigation. Another topic of interest, particularly for the financial application in portfolio selection, would be the multivariate extension of the NSVt model that allows dependence among time series components. Extending our current setup based on a skew Student-$t$ distribution for dealing with skewed time series is also a potential area to explore.  

\vspace{-0.8cm}

\section*{Appendix}

Here we provide the proofs of the results stated in the paper. 

\noindent{\bf Proof of Proposition \ref{p:jointdensity}}. We can assume $\mu_t=0$ $\forall t\in\mathbb N$ without loss of generality. The non-null mean case follows directly by applying a linear transformation. By using the stochastic representation of the NSVt process, we have that
\begin{eqnarray}\label{double_integral}
	f(y_{t+j},y_t)=\int_0^\infty\int_0^\infty f(y_{t+j},y_t|z_{t+j},z_t)
	g(z_{t+j},z_t)dz_{t+j}dz_t,
\end{eqnarray}	
where $f(y_{t+j},y_t|z_{t+j},z_t)=\dfrac{\sqrt{z_{t+j}z_t}}{2\pi\sigma^2}\exp\left\{-\dfrac{y_{t+j}^2 z_{t+j}+y_t^2 z_t}{2\sigma^2}\right\}$
and $g(\cdot,\cdot)$ is the joint density function given in (\ref{eq:jointZ}). Define the real constants $a=\dfrac{(2\pi\sigma^2)^{-1}(\nu/2)^{\nu/2+1}}{\Gamma(\nu/2)(1-\rho^j)\rho^{j(\nu/4-1/2)}}$, $b=\dfrac{\nu\rho^{j/2}}{1-\rho^j}$, and $c=\left(\dfrac{2(1-\rho^j)}{\nu}\right)^{\nu/2+2}\rho^{j(\nu/4-1/2)}$.

Rearranging the terms in (\ref{double_integral}), we have that
$f(y_{t+j},y_t)=a\sum_{k=0}^\infty\dfrac{(b/2)^{2k+\nu/2-1}}{\Gamma(k+\nu/2)k!}\Omega_k(y_{t+j})\Omega_k(y_t)$,
where we have used the summation representation of the modified Bessel function of the first kind and the Monotone Convergence Theorem to interchange the signs of integral and double summation, and
\begin{eqnarray*}
	\Omega_k(y)\equiv
	\int_0^\infty z^{k+(\nu-1)/2}\exp\left\{-z\left(\dfrac{y^2}{2\sigma^2}+\dfrac{\nu}{2(1-\rho^j)}\right)\right\}dz=
	\dfrac{\Gamma(k+(\nu+1)/2)}{\left(\dfrac{y^2}{2\sigma^2}+\dfrac{\nu}{2(1-\rho^j)}\right)^{k+(\nu+1)/2}}.
\end{eqnarray*}	

It follows that
\begin{eqnarray*}
	f(y_{t+j},y_t)&=&
	ac\omega(y_{t+j},y_t)^{\frac{\nu+1}{2}}\sum_{k=0}^\infty(\rho^j\omega(y_{t+j},y_t))^k\dfrac{\Gamma(k+(\nu+1)/2)^2}{\Gamma(k+\nu/2)k!}\\
&=&ac\omega(y_{t+j},y_t)^{\frac{\nu+1}{2}}\dfrac{\Gamma\left((\nu+1)/2\right)^2}{\Gamma\left(\nu/2\right)}{_2}F_1\left(\frac{\nu+1}{2},\frac{\nu+1}{2};\frac{\nu}{2},\rho^j\omega(y_{t+j},y_t)\right),
\end{eqnarray*}
with ${_2F}_1(\cdot,\cdot;\cdot,\cdot)$ being the hypergeometric function. After using the explicit expressions for the real constants $a$ and $c$ defined at the beginning of the proof, we get the desired result.
 $\square$\\

To prove Proposition \ref{p:condvar}, we need to state two results about the Laplace transform and the conditional inverse moment of a GAR process. These results are not given in \cite{sim1990}; they are new results provided in this paper. 

\begin{lemma}\label{l:ltransf}
	Assume $\{Z_t\}_{t\in\mathbb N}\sim\mbox{GAR}(1)$. The conditional Laplace transform of $Z_{t+j}$ given $Z_t=z_t$, say $L(s)\equiv E(\exp\{-s Z_{t+j}\}|Z_t=z_t)$, for $s>0$, is given by
	\begin{eqnarray*}
		L(s)=\dfrac{\nu(z_t\rho^j)^{-(\nu/2-1)/2}}{2(1-\rho^j)}\exp\left\{-\dfrac{\nu z_t\rho^j}{2(1-\rho^j)}\right\}\sum_{k=0}^\infty\left(\dfrac{\nu\rho^{j/2}z_t^{1/2}}{1-\rho^j}\right)^{2k+\nu/2-1}\left(s+\dfrac{\nu}{2(1-\rho^j)}\right)^{-(\nu/2+k)}\bigg/k!.
	\end{eqnarray*}
\end{lemma}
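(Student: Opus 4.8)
The plan is to compute the conditional density of $Z_{t+j}$ given $Z_t=z_t$ explicitly from the joint density in (\ref{eq:jointZ}) and then integrate it against $e^{-sz_{t+j}}$ after expanding the modified Bessel function in its defining power series. Since $\phi=\nu/2$, the marginal law of $Z_t$ is $\mbox{G}(\nu/2,\nu/2)$, whose density is proportional to $z_t^{\nu/2-1}e^{-\nu z_t/2}$; dividing (\ref{eq:jointZ}) by this marginal and simplifying the constant, the powers of $z_t$, and the exponential factor gives
\begin{eqnarray*}
g(z_{t+j}\mid z_t)=\dfrac{\nu}{2(1-\rho^j)}\left(\dfrac{z_{t+j}}{\rho^j z_t}\right)^{\frac{\nu/2-1}{2}}\exp\left\{-\dfrac{\nu z_{t+j}}{2(1-\rho^j)}-\dfrac{\nu\rho^j z_t}{2(1-\rho^j)}\right\}I_{\nu/2-1}\left(\dfrac{\nu\sqrt{\rho^j z_{t+j}z_t}}{1-\rho^j}\right).
\end{eqnarray*}

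Next I would substitute $I_{\nu/2-1}(x)=\sum_{k\ge0}(x/2)^{2k+\nu/2-1}/\{\Gamma(k+\nu/2)\,k!\}$ with $x=\nu\sqrt{\rho^j z_{t+j}z_t}/(1-\rho^j)$, multiply by $e^{-sz_{t+j}}$, and integrate term by term over $z_{t+j}\in(0,\infty)$; because every term of the series is nonnegative, Tonelli's theorem justifies the interchange of sum and integral. In the $k$th term the $z_{t+j}$-dependent factor is $z_{t+j}^{k+\nu/2-1}\exp\{-z_{t+j}(s+\nu/(2(1-\rho^j)))\}$, so its integral equals $\Gamma(k+\nu/2)\,(s+\nu/(2(1-\rho^j)))^{-(k+\nu/2)}$, and the $\Gamma(k+\nu/2)$ cancels the one in the Bessel series. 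Collecting the surviving powers of $z_t$, $\rho^j$, and the numerical constants then yields the stated series expression for $L(s)$.

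The main obstacle is purely bookkeeping: tracking the powers of $2$, of $\rho^{j/2}z_t^{1/2}$, and of $s+\nu/(2(1-\rho^j))$ through the cancellations, together with the (routine) justification of term-by-term integration. As a consistency check, one can verify $L(0)=1$: the resulting series over $k$ telescopes into an exponential, and the same computation in fact produces the compact closed form $L(s)=\left(1+\frac{2(1-\rho^j)}{\nu}s\right)^{-\nu/2}\exp\left\{-\frac{\rho^j z_t s}{1+\frac{2(1-\rho^j)}{\nu}s}\right\}$, though only the series form is needed for the subsequent derivation of the conditional inverse moment and of Proposition \ref{p:condvar}. An alternative, in fact slicker, route starts from the mixed representation in Remark \ref{rem_sto_rep}: conditionally on $U$, $Z_{t+j}\sim\mbox{G}(\nu/(2(1-\rho^j)),\nu/2+U)$ and is independent of $Z_t$, and the posterior of $U$ given $Z_t=z_t$ turns out to be $\mbox{Pois}(\nu\rho^j z_t/(2(1-\rho^j)))$ (the $\Gamma(\nu/2+u)$ factors cancel), so $L(s)=E[(1+2(1-\rho^j)s/\nu)^{-(\nu/2+U)}\mid Z_t=z_t]$ can be summed directly via the Poisson probability generating function; one would then recover the displayed series by re-expanding this closed form.
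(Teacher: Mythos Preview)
Your proposal is correct and follows essentially the same route as the paper: form the conditional density from (\ref{eq:jointZ}), expand $I_{\nu/2-1}$ in its power series, interchange sum and integral (you invoke Tonelli, the paper invokes Monotone Convergence---equivalent here since all terms are nonnegative), and evaluate the resulting gamma integrals $\int_0^\infty z^{\nu/2+k-1}e^{-(s+\nu/(2(1-\rho^j)))z}\,dz$. Your closed-form expression for $L(s)$ and the alternative derivation via the Poisson mixture of Remark \ref{rem_sto_rep} are nice additions that the paper does not include.
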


\begin{proof}
	The strategy here is similar to the proof of Proposition \ref{p:jointdensity}. By using the series representation of the Bessel function and the Monotone Convergence Theorem to interchange the signs of integral and summation, the integrals involved are of the type (with $k\in\mathbb N$ denoting the index of the summation)
	\begin{eqnarray*}
		\int_0^\infty z^{\nu/2+k-1}\exp\left\{-\left(s+\dfrac{\rho^j}{2(1-\rho^j)}\right)z\right\}dz=
		\left(s+\dfrac{\nu}{2(1-\rho^j)}\right)^{-(\nu/2+k)}\Gamma(\nu/2+k).
	\end{eqnarray*}
	After some algebra, the desired result is obtained. $\square$
\end{proof}

\begin{lemma}\label{l:invmoment}
	Let $\{Z_t\}_{t\in\mathbb N}\sim\mbox{GAR}(1)$. For $\nu>2$, the conditional moment of $Z_{t+j}^{-1}$ given $Z_t=z_t$ is
	$E\left(Z_{t+j}^{-1}|Z_t=z_t\right)=\dfrac{\nu}{2(1-\rho^j)}\exp\left\{-\dfrac{\nu z_t\rho^j}{2(1-\rho^j)}\right\}\displaystyle\sum_{k=0}^\infty\dfrac{\left(\frac{\nu z_t\rho^j}{2(1-\rho^j)}\right)^k}{k!(\nu/2+k-1)}$.
\end{lemma}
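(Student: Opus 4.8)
\textbf{Proof proposal for Lemma \ref{l:invmoment}.}

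The plan is to compute $E(Z_{t+j}^{-1}\mid Z_t=z_t)$ by integrating the conditional Laplace transform $L(s)$ from Lemma \ref{l:ltransf} over $s\in(0,\infty)$, exploiting the elementary identity $x^{-1}=\int_0^\infty e^{-sx}\,ds$ valid for $x>0$. Thus $E(Z_{t+j}^{-1}\mid Z_t=z_t)=\int_0^\infty L(s)\,ds$, and the task reduces to integrating the series expression for $L(s)$ term by term. Each term of $L(s)$ is a constant times $\left(s+\frac{\nu}{2(1-\rho^j)}\right)^{-(\nu/2+k)}$, and for $\nu/2+k>1$ (which holds for all $k\ge 0$ precisely when $\nu>2$, explaining the hypothesis) we have the explicit antiderivative
\begin{eqnarray*}
\int_0^\infty\left(s+\dfrac{\nu}{2(1-\rho^j)}\right)^{-(\nu/2+k)}ds=\dfrac{1}{\nu/2+k-1}\left(\dfrac{\nu}{2(1-\rho^j)}\right)^{-(\nu/2+k-1)}.
\end{eqnarray*}

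Next I would substitute this back into the series for $L(s)$, collecting the powers of $\frac{\nu}{2(1-\rho^j)}$ and of $\left(\frac{\nu\rho^{j/2}z_t^{1/2}}{1-\rho^j}\right)$. The prefactor $\frac{\nu(z_t\rho^j)^{-(\nu/2-1)/2}}{2(1-\rho^j)}\exp\{-\frac{\nu z_t\rho^j}{2(1-\rho^j)}\}$ is independent of $s$ and comes straight out of the integral. Inside the sum, combining $\left(\frac{\nu\rho^{j/2}z_t^{1/2}}{1-\rho^j}\right)^{2k+\nu/2-1}$ with $\left(\frac{\nu}{2(1-\rho^j)}\right)^{-(\nu/2+k-1)}$ should, after cancellation, leave a clean power of $\frac{\nu z_t\rho^j}{2(1-\rho^j)}$ equal to $k$, together with the residual constant factors that merge with the prefactor to produce the stated $\frac{\nu}{2(1-\rho^j)}\exp\{-\frac{\nu z_t\rho^j}{2(1-\rho^j)}\}$. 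The $\frac{1}{k!(\nu/2+k-1)}$ factor is exactly what appears in the claimed formula.

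The main obstacle — really the only non-routine point — is justifying the interchange of the integral over $s$ with the infinite summation defining $L(s)$. Since every term of the series is nonnegative (all quantities $\nu,z_t,\rho^j,1-\rho^j$ and the powers are positive), this is immediate from Tonelli's theorem, and the hypothesis $\nu>2$ guarantees that each integrated term is finite; the finiteness of the resulting sum then follows because $E(Z_{t+j}^{-1}\mid Z_t=z_t)<\infty$ for $\nu>2$ (an inverse-gamma-type moment bound, or simply from the convergence of the final series, whose general term is $O(k^{-1}(\lambda^k/k!))$ with $\lambda=\frac{\nu z_t\rho^j}{2(1-\rho^j)}$). After this justification the computation is purely algebraic bookkeeping of exponents, and matching it against the asserted expression completes the proof. $\square$
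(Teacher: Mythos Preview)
Your proposal is correct and follows essentially the same route as the paper's proof: the paper invokes the identity $E(Z_{t+j}^{-1}\mid Z_t=z_t)=\int_0^\infty L(s)\,ds$ (citing \cite{creetal81}, which is precisely your $x^{-1}=\int_0^\infty e^{-sx}\,ds$), interchanges sum and integral via the Monotone Convergence Theorem (your Tonelli argument), evaluates the same elementary integral $\int_0^\infty(s+\tfrac{\nu}{2(1-\rho^j)})^{-(\nu/2+k)}ds$ under the condition $\nu>2$, and finishes with the algebraic simplification you describe.
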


\begin{proof}
	By using Eq. (2) from \cite{creetal81}, we have that $E\left(Z_{t+j}^{-1}|Z_t=z_t\right)=\int_0^\infty L(s)ds$, where $L(\cdot)$ is the Laplace transform given in Lemma \ref{l:ltransf}. We use again the Monotone Convergence Theorem to interchange the signs of integral and summation. The integrals involved here are of the type (with $k\in\mathbb N$ denoting the index of the summation)
	\begin{eqnarray*}
			\int_0^\infty \left(s+\dfrac{\nu}{2(1-\rho^j)}\right)^{-(\nu/2+k)}ds=
			\left(\dfrac{\nu}{2(1-\rho^j)}\right)^{-(\nu/2+k)+1}\bigg/(\nu/2+k-1),
	\end{eqnarray*}	
	which holds for $\nu>2$. After some algebric manipulations, we obtain the result. $\square$
\end{proof}

With Lemmas \ref{l:ltransf} and \ref{l:invmoment} at hand, we are able to prove Proposition \ref{p:condvar} as follows.\\

\noindent{\bf Proof of Proposition \ref{p:condvar}.} Let $\mathcal F_1$, $\mathcal F_2$ and $\mathcal F_3$ be $\sigma$-algebras generated by $Y_t$, $(Y_t,Z_t)$ and $(Y_t,Z_t,Z_{t+j})$, respectively. We have that $\mathcal F_1\subset\mathcal F_2\subset \mathcal F_3$. By using Theorem 5.1.6(ii) from \cite{dur2010} and our model definition, we have the following: $E(Y_{t+j}|Y_t)=E(Y_{t+j}|\mathcal F_1)=E(E(Y_{t+j}|\mathcal F_3)|\mathcal F_1)=E(Y_{t+j}|Z_{t+j})=\mu_{t+j}$. 
	
	In a similar way, $E(Y_{t+j}^2|Y_t)=E(Y_{t+j}^2|\mathcal F_1)=\mu_{t+j}^2+\sigma^2E(Z_{t+j}^{-1}|\mathcal F_1)$, where by using again the aforementioned Theorem given in \cite{dur2010}, we get $E(Z_{t+j}^{-1}|\mathcal F_1)=E(E(Z_{t+j}^{-1}|\mathcal F_2)|\mathcal F_1)=E(Z_{t+j}^{-1}|Z_t)$. Therefore, we have that $\mbox{Var}(Y_{t+j}|Y_t)=\sigma^2 E(E(Z_{t+j}^{-1}|Z_t)|Y_t)$.
	
	We now use the expression for the conditional expectation $E(Z_{t+j}^{-1}|Z_t)$ given in Lemma \ref{l:invmoment}. Therefore, by interchanging the signs of expectation and summation we obtain that the conditional variance depends on the conditional expectation of the type
	\begin{eqnarray*}
	E\left(Z_t^k\exp\left\{-\frac{\nu\rho^j Z_t}{2(1-\rho^j)}\right\}|Y_t\right)=\frac{\Gamma\left(\frac{\nu+1}{2}+k\right)}{\Gamma\left(\frac{\nu+1}{2}\right)}\left(\frac{\nu}{2}+\frac{(Y_t-\mu_t)^2}{2\sigma^2}\right)^{\frac{\nu+1}{2}}\hspace{-.3cm}\bigg/\hspace{-.2cm}	\left(\frac{\nu}{2(1-\rho^j)}+\frac{(Y_t-\mu_t)^2}{2\sigma^2}\right)^{\frac{\nu+1}{2}+k},
	\end{eqnarray*}	 
where the equality follows from the fact that $Z_t|Y_t\sim\mbox{G}\left(\frac{\nu}{2}+\frac{(Y_t-\mu_t)^2}{2\sigma^2},\frac{\nu+1}{2}\right)$. The result follows after some basic algebric manipulations. $\square$	\\

\noindent{\bf Proof of Proposition \ref{E-step}.} The joint density function of $Z_t$, $Z_{t-i}$, and $U_{ti}$ given $(Y_t,Y_{t-i})=(y_t,y_{t-i})$ is 
$f(z_t,z_{t-i},u_{ti}|y_t,y_{t-i})=
\dfrac{f(y_t|z_t)f(y_{t-i}|z_{t-i})f(z_t|u_{ti})f(z_{t-i}|u_{ti})f(u_{ti})}{f(y_t,y_{t-i})}$.
Hence, it follows that
\begin{eqnarray*}
	\zeta_{ti}=\sum_{u=0}^\infty P(U_{ti}=u) \int_0^\infty z_tf(y_t|z_t)f(z_t|u)dz_t 
	\int_0^\infty f(y_{t-i}|z_{t-i})f(z_{t-i}|u)dz_{t-i}/f(y_t,y_{t-i}).
\end{eqnarray*}

The above integral is obtained explicitly by identifying gamma kernels. Using this and noting that the series can be expressed in terms of the hypergeometric function, the result is obtained after some algebra. In a similar fashion, we get the expressions for $\zeta_{t\,t-i}$ and $\tau_{ti}$. $\square$

\vspace{-0.9cm}

\begin{thebibliography}{100}


\footnotesize
	\expandafter\ifx\csname natexlab\endcsname\relax\def\natexlab#1{#1}\fi
	
	\bibitem[Abraham et al.(2006)]{abretal2006} Abraham, B., Balakrishna, N., Sivakumar, R.: Gamma stochastic volatility models. {\it Journal of Forecasting}. \textbf{25}, 153--171 (2006)
	
	\bibitem[Avellaneda and Lee(2010)]{avellaneda10} Avellaneda, M., Lee, J-H.: Statistical arbitrage in the US equities market. {\it Quantitative Finance}. \textbf{10}, 761--782 (2010)
	
	
	\bibitem[Barndorff-Nielsen(1997)]{bn1997} Barndorff-Nielsen, O.E.: Normal inverse Gaussian distributions and stochastic volatility modelling. {\it Scandinavian Journal of Statistics}. {\bf 24}, 1--13 (1997)
	
	
	\bibitem[Barreto-Souza and Ombao(2021)]{bso2021} Barreto-Souza, W., Ombao, H.: The negative binomial process: A tractable model with composite-likelihood based inference. {\it Scandinavian Journal of Statistics.} (2021) doi: 10.1111/sjos.12528		
	
	
	\bibitem[Barthelm\'e and Chopin(2014)]{barcho2014} Barthelm\'e, S., Chopin, N.: Expectation propagation for
	likelihood-free inference. {\it Journal of the American Statistical Association}. {\bf 109}, 315--333 (2014)
	
	\bibitem[Bormetti et al.(2020)]{boretal2020} Bormetti, G., Casarin, R., Corsi, F., Livieri, G.: A
	stochastic volatility model with realized measures for option pricing. {\it Journal of Business and Economic Statistics}. {\bf 38}, 856--871 (2020)
	
	\bibitem[Broto and Ruiz(2004)]{brorui2004} Broto, C., Ruiz, E.: Estimation methods for stochastic volatility models: A survey. {\it Journal of Economic Surveys}. {\bf 18}, 613--650 (2004)
	
	\bibitem[Bollerslev(1986)]{bol1986} Bollerslev, T.: Generalized autoregressive conditional heteroskedasticity. {\it Journal of Econometrics}. {\bf 31}, 307--327 (1986)
	
	
	\bibitem[Chan et al.(2018)]{chanetal2018} Chan, J.C.C., Eisenstat, E., Hou, C., Koop, G.: Composite likelihood methods for large Bayesian VARs with stochastic volatility. CAMA Working Papers 1--26, Centre for Applied Macroeconomic Analysis, Crawford School of Public Policy, The Australian National University (2018)
		
	\bibitem[Chib et al.(2002)]{chietal2002} Chib, S., Nardari, F., Shephard, N.: Markov chain Monte Carlo methods for stochastic volatility models. {\it Journal of Econometrics}. {\bf 108}, 281--316 (2002)
	
	\bibitem[Creal(2017)]{cre2017} Creal, D.D.: A class of non-Gaussian state space models with exact likelihood inference. {\it Journal of Business and Economic Statistics}. {\bf 35}, 585--597 (2017)		
		
	\bibitem[Cressie et al.(1981)]{creetal81} Cressie, N., Davis, A.S., Folks, J.L., Policello II, G.E.: The moment-generating function and negative integer moments. {\it American Statistician}. {\bf 35}, 148--150 (1981)
	
	\bibitem[Cox(1981)]{cox1981} Cox, D.R.: Statistical analysis of time series. {\it Scandinavian Journal of Statistics}. \textbf{8}, 93--115 (1981)
		
	\bibitem[Delatola and Griffin(2011)]{delgri2011} Delatola, E.I., Griffin, J.E.: Bayesian nonparametric modelling of the return distribution with stochastic volatility. {\it Bayesian Analysis}. \textbf{6}, 901--926 (2011)
		
	\bibitem[Dempster et al.(1977)]{demetal1977} Dempster,  A.P., Laird, N.M., Rubin, D.B.: Maximum likelihood from incomplete data via the EM algorithm. {\it Journal of the Royal Statistical Society: Series B}. \textbf{39}, 1--38 (1977)
	
	\bibitem[Durrett(2010)]{dur2010} Durrett, R.: Probability: Theory and Examples. Cambridge Series in Statistical and Probabilistic Mathematics, Cambridge University Press, Cambridge (2010)
	
	\bibitem[Engle(1982)]{eng1982} Engle, R.F.: Conditional heteroscedasticity with estimates of the variance of United Kingdom inflation. {\it Econometrika}. \textbf{50}, 987--1007 (1982)
	
	
	\bibitem[Francq and Zakoian(2019)]{frazak2019} Francq, C., Zakoian, J.M.: GARCH Models: Structure, Statistical Inference and Financial Applications, 2nd Edition, John Wiley \& Sons (2019)
	
	
	\bibitem[Gao and Song(2011)]{gaoson2011}  Gao, X., Song, P.X.-K.: Composite likelihood EM algorithm with applications to multivariate Markov model. {\it Statistica Sinica}. \textbf{21}, 165--185 (2011)		
	
	
	\bibitem[Girolami and Calderhead(2011)]{gircal2011} Girolami, M., Calderhead, B.: Riemann manifold Langevin and Hamiltonian Monte Carlo methods. {\it Journal of the Royal Statistical Society $-$ Series B}. \textbf{73}, 123--214 (2011)			
	
	\bibitem[Godambe(1960)]{godambe1960} Godambe, V.P.: An optimum property of regular maximum likelihood equation. {\it Annals of Mathematical Statistics}. \textbf{31}, 1208--1211 (1960)
	
	\bibitem[Gong and Stoffer(2021)]{gonsto2021} Gong, C., Stoffer, D.S.: A note on efficient fitting of stochastic volatility models. {\it Journal of Time Series Analysis}. \textbf{42}, 186--200 (2021)
	
	
	\bibitem[Gouri\'eroux and Yang(2016)]{gouyan2016} Gouri\'eroux, C., Yang, L.: A flexible state-space model with application to stochastic volatility. Available at SSRN: \url{https://ssrn.com/abstract=2873512} (2016)
	
	
	\bibitem[Harvey et al.(1994)]{haretal1994} Harvey, A., Ruiz, E., Shephard, N.: Multivariate stochastic variance models. {\it The Review of Economic Studies}. \textbf{61}, 247--264 (1994)
	
	\bibitem[Hastie et al.(2016)]{hastie2016} Hastie, T., Tibshirani, M., Friedman, J.: The Elements of Statistical Learning, 2nd Edition, Springer, New York (2016)
	
	\bibitem[Hautsch et al.(2015)]{haut2015} Hautsch, N., Kyj, L.M., and Malec, P.: Do  high frequency data improve high-dimensional portfolio allocations? {\it Journal of Applied Econometrics}. \textbf{39}, 263--290 (2015)   
	
	\bibitem[Iliopoulos et al.(2005)]{ilietal2005} Iliopoulos, G., Karlis, D., Ntzoufras, I.: Bayesian estimation in Kibble's bivariate gamma distribution. {\it Canadian Journal of Statistics}. \textbf{33}, 571--589 (2005)
	
	\bibitem[Jacquier et al.(1994)]{jacetal1994} Jacquier, E., Polson, N., Rossi, P.: Bayesian analysis of stochastic volatility models (with discussion). {\it Journal of Business and Economic Statistics}. \textbf{12}, 371--417 (1994)
	
	\bibitem[Jacquier et al.(2004)]{jacetal2004} Jacquier, E., Polson, N., Rossi, P.: Bayesian analysis of stochastic volatility models with fat-tails and correlated errors. {\it Journal of Econometrics}. \textbf{122}, 185--212 (2004)
	
	\bibitem[Jensen and Maheu(2010)]{jenmah2010} Jensen, M.J. and Maheu, J.M.: Bayesian semiparametric stochastic volatility modeling. {\it Journal of Econometrics}. \textbf{157}, 306--316 (2010)	
		
	\bibitem[Jasra et al.(2011)]{jasetal2011} Jasra, A., Stephens, D.A., Doucet, A., Tsagaris, T.: Inference for L\'evy-driven stochastic volatility models via adaptive sequential Monte Carlo. {\it Scandinavian Journal of Statistics}. \textbf{38}, 1--22 (2011)
	
	\bibitem[Kastner and Fr\"{u}hwirth-Schnatter(2014)]{kassch2014} Kastner, G., Fr\"{u}hwirth-Schnatter, S.: Ancillarity-sufficiency interweaving strategy (ASIS) for boosting MCMC estimation of stochastic volatility models. {\it Computational Statistics \& Data Analysis}. \textbf{76}, 408--423 (2014)
		
	\bibitem[Kibble(1941)]{kib1941} Kibble, W.F.: A two-variate gamma type distribution. {\it Sankhy\={a}}. \textbf{5}, 137--150 (1941)
	
	\bibitem[Kim et al.(1998)]{kimetal1998} Kim, S., Shephard, N., Chib, S.: Stochastic volatility: Likelihood inference and comparison with ARCH models. {\it The Review of Economic Studies}. \textbf{65}, 361--393 (1998)
	
	\bibitem[Koopman and Scharth(2013)]{koosch2013} Koopman, S.J., Scharth, M.: The analysis of stochastic volatility in the presence of daily realized measures. {\it Journal of Financial Econometrics}. \textbf{11}, 76--115 (2013)
	
	\bibitem[Krauss and Stubinger(2017)]{krauss17} Krauss, C., Stubinger, Johannes: Non-linear dependence modelling with bivariate copulas: Statistical arbitrage pairs trading on the S\&P 100. {\it Applied Economics}. \textbf{23}, 1--18 (2017)	
	
	\bibitem[Laloux et al.(2000)]{lalloux2000} Laloux, L., Cizeau, P., Potters, M., Bouchard, J-P.: Random matrix theory and financial correlations. {\it International Journal of Theoretical and Applied Finance}. \textbf{3}, 391--397 (2000)   
	
	\bibitem[Liu(2009)]{liu2009} Liu, Q.: On portfolio optimization: How and when do we benefit from high-frequency data? {\it Journal of Applied Econometrics}. \textbf{24}, 560--582 (2009)  	
	
	\bibitem[Mahieu and Schotman(1998)]{mahsch1998} Mahieu, R.J., Schotman, P.C.: An empirical application of stochastic volatility models. {\it Journal of Applied Econometrics}. \textbf{13}, 333--360 (1998)	
		
	\bibitem[Nakajima and Omori(2009)]{nakomo2009} Nakajima, J., Omori, Y.: Leverage, heavy-tails and correlated jumps in stochastic volatility models. {\it Computational Statistics and Data Analysis}. \textbf{53}, 2335--2353 (2009)
	
	\bibitem[Nelson(1990)]{nel1990} Nelson, D.: ARCH models as diffusion approximations. {\it Journal of Econometrics}. \textbf{45}, 7--38 (1990)	
	
	
	
	\bibitem[Ng et al.(2011)]{ngetal2011} Ng, C.T., Joe, H., Karlis, D., Liu, J.: Composite likelihood for time series models with a latent autoregressive process. {\it Statistica Sinica}. \textbf{21}, 279--305 (2011)
	
	\bibitem[Pleroux et al.(2002)]{pleroux2002} Pleroux, V., Gopikrishnan, P., Rosenow, B., Amaral, L.A.N., Stanley, H.E.: A Random Matrix Theory Approach to Quantifying Collective Behavior of Stock Price Fluctuations. In: Takayasu H. (eds) Empirical Science of Financial Fluctuations. Springer, Tokyo (2002)	
	
	\bibitem[R Development Core Team(2020)]{R:2010} R Development Core Team: R: A Language and Environment for Statistical Computing. Vienna, Austria: R Foundation for Statistical Computing. \url{http://www.R-project.org}. (2020)
	
	\bibitem[Rue et al.(2009)]{rueetal2009} Rue, H., Martino, S., Chopin, N.: Approximate Bayesian inference for latent Gaussian models by using integrated nested Laplace approximations. {\it Journal of the Royal Statistical Society -- Series B}. \textbf{71}, 319--392 (2009)
	
	
	\bibitem[Sandmann and Koopman(1998)]{sankoo1998} Sandmann, G., Koopman. S.J.: Estimation of stochastic volatility models via Monte Carlo maximum likelihood. {\it Journal of Econometrics}. \textbf{87}, 271--301 (1998)
	
	\bibitem[Shephard and Pitt(1997)]{shepit1997} Shephard, N., Pitt, M.K.: Likelihood analysis of non-Gaussian measurement time series. {\it Biometrika}. \textbf{84}, 653--667 (1997)
	
	\bibitem[Sim(1990)]{sim1990} Sim, C.H.: First-order autoregressive models for gamma and exponential processes. {\it Journal of Applied Probability}. \textbf{27}, 325--332 (1990)
	
	\bibitem[Stubinger and Endres(2018)]{stubinger18} Stubinger, J., Endres, S.:  Pairs trading with a mean-reverting jump diffusion model on high-frequency data. {\it Quantitative Finance}. \textbf{18}, 1735--1751 (2018)	
	
	\bibitem[Taylor(1986)]{tay1986} Taylor, S.J.: Modelling Financial Time Series. New York: John Willey \& Sons, First Edition (1986)
	
	\bibitem[Taylor(1994)]{tay1994} Taylor, S.J.: Modeling stochastic volatility: A review and comparative study. {\it Mathematical Finance}. \textbf{4}, 183--204 (1994)
	
	
	\bibitem[Tran et al.(2017)]{tranetal2017} Tran, M.-N., Nott, D.J., Kohn, R.: Variational Bayes with intractable likelihood. {\it Journal of Computational and Graphical Statistics}. \textbf{26}, 873--882 (2017)
	
	
	
	\bibitem[Varin and Vidoni(2008)]{var08_2} Varin, C, Vidoni, P.: Pairwise likelihood inference for general state space models. {\it Econometric Reviews}. \textbf{28}, 170--185 (2008)	
	
	\bibitem[Varin et al.(2011)]{varetal11} Varin, C., Reid, N., Firth, D.: An overview of composite likelihood methods. {\it Statistica Sinica}. \textbf{21}, 5--42 (2011)
	
	\bibitem[Veraart(2011)]{ver2011} Veraart, A.E.D.: Likelihood estimation of L\'evy-driven stochastic volatility models through realized variance measures. {\it Econometrics Journal}. \textbf{14}, 204--240 (2011)
	
	
	\bibitem[Virbickait\.e et al.(2019)]{viretal2019} Virbickait\.e, A., Lopes, H.F., Aus\'in, M.C., Galeano, P.: Particle learning for Bayesian semi-parametric stochastic volatility model. {\it Econometric Reviews}. \textbf{39}, 1007--1023 (2019)
	
	\bibitem[Yuan and Lin(2006)]{yuan06} Yuan, M., Lin, Y.: Model selection and estimation in regression with grouped variables. {\it Journal of the Royal Statistical Society -- Series B}. \textbf{68}, 49--67 (2006)
	

	
\end{thebibliography}
\begin{singlespace}

\end{singlespace}
\end{document}